\newtheorem{lemma}{Lemma}[section]
\newtheorem{theorem}[lemma]{Theorem}
\newtheorem{definition}[lemma]{Definition}
\begin{document}

\title{Approximation Algorithm for Minimum Weight Connected $m$-Fold Dominating Set}
\author{\footnotesize Zhao Zhang$^{1}$\quad Jiao Zhou$^{2}$\quad Ker-I Ko$^3$\quad Ding-zhu Du$^4$\\
    {\it\small $^1$ College of Mathematics Physics and Information Engineering, Zhejiang Normal University}\\
    {\it\small Jinhua, Zhejiang, 321004, China}\\
   {\it\small $^2$ College of Mathematics and System Sciences, Xinjiang University}\\
    {\it\small Urumqi, Xinjiang, 830046, China}\\
    {\it\small $^3$ Department of Computer Science, National Chiao Tung University}\\
    {\it\small Hsinchu, 30050, Taiwan}\\
    {\it\small $^4$ Department of Computer Science, University of Texas at Dallas}\\
    {\it\small Richardson, Texas, 75080, USA}}
\date{}
\maketitle
{\bf Abstract}: Using connected dominating set (CDS) to serve as a virtual backbone in a wireless networks can save energy and reduce interference. Since nodes may fail due to accidental damage or energy depletion, it is desirable that the virtual backbone has some fault-tolerance. A $k$-connected $m$-fold dominating set ($(k,m)$-CDS) of a graph $G$ is a node set $D$ such that every node in $V\setminus D$ has at least $m$ neighbors in $D$ and the subgraph of $G$ induced by $D$ is $k$-connected. Using $(k,m)$-CDS can tolerate the failure of $\min\{k-1,m-1\}$ nodes. In this paper, we study Minimum Weight $(1,m)$-CDS problem ($(1,m)$-MWCDS), and present an $(H(\delta+m)+2H(\delta-1))$-approximation algorithm, where $\delta$ is the maximum degree of the graph and $H(\cdot)$ is the Harmonic number. Notice that there is a $1.35\ln n$-approximation algorithm for the $(1,1)$-MWCDS problem, where $n$ is the number of nodes in the graph. Though our constant in $O(\ln \cdot)$ is larger than 1.35, $n$ is replaced by $\delta$. Such a replacement enables us to obtain a $3.67$-approximation for the connecting part of $(1,m)$-MWCDS problem on unit disk graphs.

{\bf Keyword}: $m$-fold dominating set, connected dominating set, non-submodular function, greedy algorithm, unit disk graph.

\section{Introduction}
A wireless sensor network (WSN) consists of spatially distributed autonomous sensors to monitor physical or environmental condition, and to cooperatively pass their data through the network. During recent years, WSN has been widely used in many fields, such as environment and habitat monitoring, disaster recovery, health applications, etc. Since there is no fixed or predefined infrastructure in WSNs, frequent flooding of control messages from sensors may cause a lot of redundant contentions and collisions. Therefore, people have proposed the concept of virtual backbone which corresponds to a connected dominating set in a graph (Das and Bhargharan \cite{Das} and Ephremides {\it et al.} \cite{Ephremides}).

Given a graph $G$ with node set $V$ and edge set $E$, a subset of nodes $C\subseteq V$ is said to be a {\em dominating set} (DS) of $G$ if any $v\in V\setminus C$ is adjacent to at least one node of $C$. We say that a dominating set $C$ of $G$ is a {\em connected dominating set} of $G$ if the subgraph of $G$ induced by $C$, denoted by $G[C]$, is connected. Nodes in $C$ are called {\em dominators}, the nodes in $V\setminus C$ are called {\em dominatees}.

Because sensors in a WSN are prone to failures due to accidental damage or battery depletion, it is important to maintain a certain degree of redundancy such that the virtual backbone is more fault-tolerant. In a more general setting, every sensor has a cost, it is desirable that under the condition that tasks can be successfully accomplished, the whole cost of virtual backbone is as small as possible. These considerations lead to the Minimum Node-Weighted $k$-Connected $m$-Fold Dominating Set problem (abbreviated as $(k,m)$-MWCDS), which is defined as follows:

\begin{definition}[$(k,m)$-MWCDS]
{\rm Let $G$ be a connected graph, $k$ and $m$ be two positive integers, $c:V\rightarrow R^{+}$ be a cost function on nodes. A node subset $D\subseteq V$ is an $m$-fold dominating set ($m$-DS) if every node in $V\setminus D$ has at least $m$ neighbors in $D$. It is a $k$-connected $m$-fold dominating set ($(k,m)$-CDS) if furthermore, the subgraph of $G$ induced by $D$ is $k$-connected. The $(k,m)$-MWCDS problem is to find a $(k,m)$-CDS $D$ such that the cost of $D$ is minimized, that is, $c(D)=\sum_{u\in D}c(u)$ is as small as possible.}
\end{definition}

After Dai and Wu \cite{Dai} proposed using $(k,k)$-CDS as a model for fault-tolerant virtual backbone, a lot of approximation algorithms emerged, most of which are on unit disk graphs. In a {\rm unit disk graph} (UDG), every node corresponds to a sensor on the plane, two nodes are adjacent if and only if the Euclidean distance between the corresponding sensors is at most one unit. There are a lot of studies on fault-tolerant virtual backbone in UDG which assume unit weight on each disk. However, for a general graph with a general weight function, related studies are rare.

In this paper, we present a $(H(\delta+m)+2H(\delta-1))$-approximation algorithm for the $(1,m)$-MWCDS problem on a general graph, where $\delta$ is the maximum degree of the graph, and $H(\gamma)=\sum_{1}^\gamma 1/i$ is the Harmonic number. It is a two-phase greedy algorithm. First, it constructs an $m$-fold dominating set $D_1$ of $G$. Then it connects $D_1$ by adding a set of connectors $D_2$. It is well known that if the potential function and the cost function related with a greedy algorithm is monotone increasing and submodular, then an $O(\ln n)$ performance ration can be achieved. Unfortunately, for various minimum CDS problems, no such potential functions are known. Nevertheless, we manage, in this paper, to deal with a nonsubmodular potential function and achieve an $O(\ln \delta)$ performance ratio.

It should be pointed out that for a general graph, Guha and Khuller \cite{Guha} proposed a $1.35\ln n$-approximation for the $(1,1)$-MWCDS problem. Though our constant in $O(\ln n)$ is larger, the parameter $n$ in the performance ratio is replaced by $\delta$. In many cases, $\delta$ might be substantially smaller than $n$. In particular, for an UDG, due to such an replacement, after having found an $m$-DS, the connecting part has a performance ratio at most $3.67$. In \cite{Zou1}, Zou {\it et al.} proposed a method for the connecting part with performance ratio at most $3.875$, which makes use of a $1.55$-approximation algorithm \cite{Robin} for the classic Minimum Steiner Tree problem. If using currently best ratio for Minimum Steiner Tree problem \cite{Byrka}, then their ratio is at most $3.475$. Notice that the algorithm in \cite{Byrka} uses randomized iterative rounding. So, although our ratio 3.67 is a litter larger than $3.475$, our algorithm has the advantage that it is purely combinatorial. Furthermore, we believe that our method is also of more theoretical interests and may find more applications in the study of other related problems.

The rest of this paper is organized as follows. Section \ref{sec.2} introduces related works. Some notation and some preliminary are given in Section \ref{sec.3}. In Section \ref{sec-algorithm}, the algorithm is presented. Section \ref{sec.5} analyzes the performance ratio. Section \ref{secUDG} improves the ratio on unit disk graph. Section \ref{sec.7} concludes the paper.

\section{Related work}\label{sec.2}

The idea of using a connected dominating set as a virtual backbone for WSN was proposed by Das and Bhargharan \cite{Das} and Ephremides {\it et al.} \cite{Ephremides}. Constructing a CDS of the minimum size is NP-hard. In fact, Guha and Khuller \cite{Guha} proved that a minimum CDS cannot be approximated within $\rho\ln n$ for any $0<\rho<1$ unless $NP\subseteq DTIME(n^{O(loglog n)})$. In the same paper, they proposed two greedy algorithms with performance ratios of $2(H(\delta)+1)$ and $H(\delta)+2$, respectively, where $\delta$ is the maximum degree of the graph and $H(\cdot)$ is the harmonic number. This was improved by Ruan {\it et al.} \cite{Ruan} to $2+\ln\delta$. Du {\it et al.} \cite{Du} presented a $(1+\varepsilon)\ln(\delta-1)$-approximation algorithm, where $\varepsilon$ is an arbitrary positive real number.

For unit disk graphs, a polynomial time approximation scheme (PTAS) was given by Cheng {\it et al.} \cite{Cheng}, which was generalized to higher dimensional space by Zhang {\it et al.} \cite{Zhang1}. There are a lot of studies on distributed algorithms for this problem. For a comprehensive study on CDS in UDG, the readers may refer to the book \cite{DuBookCDS}.

Considering the weighted version of the CDS problem, Guha and Khuller \cite{Guha} proposed a $(c_{n}+1)\ln n$-approximation algorithm in a general graph, where $c_{n}\ln k$ is the performance ratio for the node weighted Steiner tree problem ($k$ is the number of terminal nodes to be connected). Later, they \cite{Guha1} improved it to an algorithm of performance ratio at most $(1.35+\varepsilon)\ln n$. For the minimum weight CDS in UDG, Zou {\it et al.} \cite{Zou1} gave a $(9.875+\varepsilon)$-approximation.

The problem of constructing fault-tolerant virtual backbones was introduced by Dai and Wu \cite{Dai}. They proposed three heuristic algorithms for the minimum $(k,k)$-CDS problem. However, no performance ratio analysis was given. A lot of works have been done for the CDS problem in UDG. The first constant approximation algorithm in this aspect was given by Wang {\it et al.} \cite{Wang}, who obtained a 72-approximation for the $(2,1)$-CDS problem in UDG. Shang {\it et al.} \cite{Shang} gave an algorithm for the minimum $(1,m)$-CDS problem and an algorithm for the minimum $(2,m)$-CDS problem in UDG, the performance ratios are $5+\frac{5}{m}$ for $m\leq5$ and $7$ for $m>5$, and $5+\frac{25}{m}$ for $2\leq m\leq5$ and $11$ for $m>5$, respectively. Constant approximation algorithms also exist for $(3,m)$-CDS in UDG \cite{WangWei,WangWei2}. Recently, Shi {\it et al.} \cite{ShiZhang} presented the first constant approximation algorithm for general $(k,m)$-CDS on UDG.

For the fault-tolerant CDS problem in a general graph, Zhang {\it et al.} \cite{Zhang2} gave a $2rH(\delta_r+m-1)$-approximation
for the minimum $r$-hop $(1,m)$-CDS problem, where $\delta_r$ is the maximum degree in $G^r$, the $r$-th power graph of $G$. A node $u$ is $r$-hop dominated by a set $D$ if it is at most $r$-hops away from $D$. In particular, taking $r=1$, the algorithm in \cite{Zhang2} has performance ratio at most $2H(\delta+m-1)$ for the minimum $(1,m)$-CDS problem. This was improved by our recent work \cite{Zhou} to $2+H(\delta+m-2)$. We also gave an $(\ln\delta+o(\ln\delta))$-approximation algorithm for the minimum $(2,m)$-CDS problem \cite{Shi} and the minimum $(3,m)$-CDS problem \cite{ZhangZhou} on a general graph.

For the weighted version of fault-tolerant CDS problem on UDG, as a consequence of recent work \cite{LiJian}, the minimum weight $(1,1)$-CDS problem admits a PTAS. Combining the constant approximation algorithm for the minimum weight $m$-fold dominating set problem \cite{Fukunage} and the 3.475-approximation algorithm for the connecting part \cite{Zou1}, $(1,m)$-MWCDS on UDG admits a constant-approximation. Recently, we \cite{ShiZhang,ZhangShi} gave the first constant approximation algorithm for the general minimum weight $(k,m)$-CDS problem on unit disk graph.

As far as we know, there is no previous work on the approximation of the weighted version of fault-tolerant CDS problem in a general graph.

Notice that in \cite{Zou1}, the $3.875$-approximation for the connecting part is based on the 1.55-approximation algorithm \cite{Robin} for the classic minimum Steiner tree problem. If the best known ratio for the Steiner tree problem is used, which is $1.39$ currently, then their connecting part has performance ratio at most $3.475$. Notice that the $1.39$-approximation for the Steiner tree problem uses randomized iterative rounding. So, although our performance ratio is larger than $3.475$, it has the advantage that it is purely combinatorial.

\section{Preliminaries}\label{sec.3}

In this section, we introduce some notation and give some preliminary results. For a node $u\in V(G)$, denote by $N_G(u)$ the set of neighbors of $u$ in $G$, and $deg_G(u)=|N_G(u)|$ is the degree of node $u$ in $G$. For a node subset $D\subseteq V(G)$, $N_G(D)=\bigcup_{u\in D}N_G(u)\setminus D$ is the neighbor set of $D$, $G[D]$ is the subgraph of $G$ induced by $D$. When there is no confusion in the context, the vertex set of a subgraph will be used to denote the subgraph itself.

For an element set $U$, suppose $f:2^U\mapsto \mathbb R^+$ is a set function on $U$ (called a {\em potential function}). For two element sets $C,D\subseteq U$, let
$$
\bigtriangleup_Df(C)=f(C\cup D)-f(C)
$$
be the marginal profit obtained by adding $D$ into $C$. For simplicity, $\bigtriangleup_uf(C)$ will be used to denote $\bigtriangleup_{\{u\}}f(C)$ when $u$ is a node. Potential function $f$ is {\em monotone increasing} if $f(C)\leq f(D)$ holds for any subsets $C\subseteq D\subseteq V$. It is {\em submodular} if and only if $\bigtriangleup_uf(C)\geq\bigtriangleup_uf(D)$ holds for any $C\subseteq D\subseteq V$ and any $u\in V\setminus D$. A monotone increasing and submodular function $f$ with $f(\emptyset)=0$ is called a {\em polymatroid}. Given an element set $U$ with cost function $c:U\mapsto \mathbb R^+$ and given a polymatroid $f:2^U\mapsto \mathbb R^+$, denote by $\Omega_f=\{C\subseteq U\colon \bigtriangleup_uf(C)=0\ \mbox{for any $u\in U$}\}$. The {\em Submodular Cover} problem is:
\begin{align*}
\min &\ \ c(C)=\sum_{u\in C}c(u)\\
s.t. &\ \ C\in\Omega_f.
\end{align*}

The following is a classic result which can be found in \cite{DuBook} Theorem 2.29.
\begin{theorem}\label{thm-submodular-cover}
The greedy algorithm for the submodular cover problem has performance ratio $H(\gamma)$, where $\gamma=\max\{f(\{u\})\colon u\in U\}$ and $H(\gamma)=\sum_{i=1}^\gamma 1/i$ is the Harmonic number.
\end{theorem}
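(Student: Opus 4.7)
The plan is to analyze the natural greedy algorithm, which starts from $C=\emptyset$ and iteratively adjoins the element $v\notin C$ maximizing the ratio $\bigtriangleup_v f(C)/c(v)$, stopping once $C\in\Omega_f$. Denote the greedy picks by $v_1,\ldots,v_t$ in order, set $C_i=\{v_1,\ldots,v_i\}$ and $C_g=C_t$, and let $C^*$ be an optimal cover, so that $f(C^*)=f(U)=:f_{\max}$ by monotonicity and feasibility of $C^*$.

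\textbf{Step 1 (Submodularity plus greedy rule).} Applying submodularity element by element to $C^*\setminus C_i$ yields
\[
f_{\max}-f(C_i)\ =\ f(C_i\cup C^*)-f(C_i)\ \leq\ \sum_{u^*\in C^*}\bigtriangleup_{u^*}f(C_i),
\]
while the greedy rule gives $\bigtriangleup_u f(C_i)/c(u)\leq\bigtriangleup_{v_{i+1}}f(C_i)/c(v_{i+1})$ for every $u\notin C_i$. Multiplying the latter by $c(u^*)$, summing over $u^*\in C^*$, and combining with the submodularity bound yields the key inequality
\[
f_{\max}-f(C_i)\ \leq\ c(C^*)\cdot\frac{\bigtriangleup_{v_{i+1}}f(C_i)}{c(v_{i+1})}.
\]

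\textbf{Step 2 (Per-OPT-element charging).} I would then assign to each of the $\bigtriangleup_{v_{i+1}}f(C_i)$ units of gain produced at step $i+1$ a unit price $p_{i+1}=c(v_{i+1})/\bigtriangleup_{v_{i+1}}f(C_i)$, so that $c(C_g)=\sum_i p_{i+1}\cdot\bigtriangleup_{v_{i+1}}f(C_i)$. Rephrased, the greedy rule says $p_{i+1}\cdot\bigtriangleup_{u^*}f(C_i)\leq c(u^*)$ for each $u^*\in C^*\setminus C_i$. Mimicking the set-cover charging argument, I would distribute the greedy-produced units among the elements $u^*\in C^*$; since each $u^*$ contributes at most $f(\{u^*\})\leq\gamma$ such units overall, and the sequence $\bigtriangleup_{u^*}f(C_i)$ is non-increasing (by submodularity), integer-valued, and bounded by $\gamma$, the prices charged against $u^*$ telescope to at most
\[
c(u^*)\left(\frac{1}{\gamma}+\frac{1}{\gamma-1}+\cdots+\frac{1}{1}\right)\ =\ c(u^*)\,H(\gamma).
\]
Summing over $u^*\in C^*$ then yields $c(C_g)\leq H(\gamma)\cdot c(C^*)$, which is the claimed ratio.

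\textbf{Main obstacle.} The subtle point is the per-OPT-element charging in Step 2. A naive telescope on the residual $f_{\max}-f(C_i)$ directly via the key inequality of Step 1 only yields the weaker bound $H(f_{\max})$. Obtaining the sharper $H(\gamma)$ requires matching greedy-produced units to individual OPT elements so that, when the remaining marginal of $u^*$ is $j$, the unit then charged to $u^*$ is paid at most $c(u^*)/j$. This is exactly where I expect the book-keeping to take most of the care; once it is set up, the harmonic sum $H(\gamma)$ falls out element-by-element and the remaining global summation is routine.
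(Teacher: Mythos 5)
First, a point of comparison: the paper does not actually prove Theorem \ref{thm-submodular-cover} — it is quoted as a known result from \cite{DuBook} — so there is no in-paper proof to measure you against. Judged on its own, your proposal has the right architecture: the greedy rule by marginal gain per unit cost, the submodularity bound $f_{\max}-f(C_i)\leq\sum_{u^*\in C^*}\bigtriangleup_{u^*}f(C_i)$, and a per-OPT-element harmonic telescope. Step 1 is correct, and you correctly diagnose that the naive global telescope only yields $H(f_{\max})$.

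The gap is in Step 2, at exactly the point you flag as the ``main obstacle,'' and it is not mere bookkeeping — a further idea is required. Distributing the $\bigtriangleup_{v_{i+1}}f(C_i)$ units produced at step $i+1$ among the elements of $C^*$ according to their marginal drops requires
$$
\sum_{u^*\in C^*}\big(\bigtriangleup_{u^*}f(C_i)-\bigtriangleup_{u^*}f(C_{i+1})\big)\ \geq\ f(C_{i+1})-f(C_i),
$$
and this per-step inequality can fail: take $f(S)=\min\{|S|,2\}$ on $U=\{v,y_1,y_2\}$ with unit costs and $C^*=\{y_1,y_2\}$; if greedy first picks $v$, the left-hand side is $0$ while the right-hand side is $1$ (here $\bigtriangleup_{y_j}f(\{v\})=1=\bigtriangleup_{y_j}f(\emptyset)$). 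So no per-step assignment of units to OPT elements of the kind you describe exists in general. The standard repair is to observe that the unit prices $p_i=c(v_i)/\bigtriangleup_{v_i}f(C_{i-1})$ are non-decreasing in $i$ (this uses submodularity of $f$ together with the greedy rule at step $i$), and then to perform an Abel-summation rearrangement: writing $c(C_g)=\sum_i p_{i+1}(g_i-g_{i+1})$ with $g_i=f_{\max}-f(C_i)$, the monotonicity of the $p_i$'s lets you replace each $g_i$ by the larger quantity $\sum_{u^*}\bigtriangleup_{u^*}f(C_i)$ inside the sum, after which your per-element harmonic telescope does close the argument. This is precisely the device the paper itself deploys for its nonsubmodular potential in Claim 2 of Theorem \ref{lem10-10-1} (the rewriting in \eqref{eq-11-17-1}--\eqref{eq-11-17-2} together with the verification that $w_i\geq w_{i-1}$), so the missing step can be lifted verbatim from there.
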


\section{The Algorithm}\label{sec-algorithm}

\subsection{The Algorithm}

For a node set $C$, denote by $p(C)$ the number of components in $G[C]$. For a node $u\in V\setminus C$, let $N_C(u)$ denote the set of neighbors of $u$ in $C$. Define
$$
q_C(u)=\left\{\begin{array}{ll}\max\{m-|N_C(u)|,0\}, & u\in V\setminus C,\\ 0, & u\in C,\end{array}\right.
$$
and
$$
q(C)=m|V|-\sum_{u\in V}q_C(u).
$$

For a node set $U\subseteq V$, denote by $NC_C(U)$ the set of components of $G[C]$ which are adjacent to $U$. Every component in $NC_C(U)$ is called a {\em component neighbor} of $U$ in $G[C]$. For a node $u\in V$, we shall use $S_u$ to denote some star with center $u$, that is, $S_u$ is a subgraph of $G$ induced by edges between node $u$ and some of $u$'s neighbors in $G$. In particular, a node is a star of cardinality one and  an edge is a star of cardinality two. To abuse the notation a little, we also use $S_u$ to denote the set of nodes in $S_u$. Suppose $S_u\setminus\{u\}=\{u_{1},u_{2},\ldots,u_{s}\}$, where $c(u_{1})\leq c(u_{2})\leq\ldots\leq c(u_{s})$. Define
\begin{equation}\label{eq14-11-14-2}
p'_{C}(S_u)=|NC_{C}(u)|-1+\sum_{i=1}^{s}\min\{1,
-\bigtriangleup_{u_{i}}p(C\cup\{u,u_{1},\ldots,u_{i-1}\})\}.
\end{equation}
Call $e_C(S_u)=p'_C(S_u)/c(S_u)$ the {\em efficiency of $S_u$ with respect to $C$}.

The algorithm is presented in Algorithm \ref{alg10-6-1}.

\begin{algorithm}[h!]
\caption{\textbf{}}
 Input: A connected graph $G=(V,E)$.

Output: A $(1,m)$-CDS $D_G$ of $G$.
\begin{algorithmic}[1]
    \State Set $D_1$ $\leftarrow$ $\emptyset$
    \While{there exists a node $u\in V\setminus D_1$ such that $\bigtriangleup_uq(D_1)> 0$, }
       \State select $u$ which maximizes $\bigtriangleup_u q(D_1)/c(u)$
       \State $D_1 \leftarrow D_1\cup \{u\}$
    \EndWhile
      \State Set $D_2\leftarrow\emptyset$
    \While{there exists a star $S_u\subseteq V\setminus (D_1\cup D_2)$ such that $p_{D_1\cup D_2}'(S_u)> 0$, }
       \State select a star $S_u$ with the largest efficiency with respect to $D_1\cup D_2$. %such that $p_{D_1\cup D_2}'(S_u)/c(S_u)=\max_{S_v}\{p_{D_1\cup D_2}'(S_v)/c(S_v)\}$
       \State $D_2 \leftarrow D_2\cup S_u$
    \EndWhile
    \State Output  $D_{G}\leftarrow D_1\cup D_2$
\end{algorithmic}\label{alg10-6-1}
\end{algorithm}

\subsection{The Idea of the Algorithm}
The idea underlying the algorithm is as follows. Potential function $q_{D_1}(u)$ measures how many more times that node $u$ needs to be dominated, and $q(D_1)$ is the total residual domination requirement. As can be seen from Lemma \ref{le10-8-1}, at the end of the first phase, we have an $m$-fold dominating set $D_1$. Then, the second phase aims to connect it by adding a connector set $D_2$. A natural potential function for connection is $p(D_1\cup D_2)$, the number of components of $G[D_1\cup D_2]$. That is, every iteration chooses a node set $S$ to be added into $D_2$ which reduces the number of components by the largest amount until $p(D_1\cup D_2)$ reaches 1. It is a folklore result (see Lemma \ref{lem14-11-22-6}) that simultaneously adding at most two nodes can reduce the number of components (even when ading any single node does not reduce the number of components). So, it is natural to use
\begin{equation}\label{eq14-11-14-1}
\max\{-\bigtriangleup_Sp(D_1\cup D_2)/c(S)\colon S\subseteq V\setminus (D_1\cup D_2),1\leq |S|\leq 2\}
\end{equation}
to work as a criterion for the choice of node set $S$ to be added into $D_2$. However, choosing at most two nodes might yield a solution with very bad performance ratio. Consider the example shown in Fig.\ref{fig11-01-1}, its optimal solution $OPT=\{u,v_1,\ldots,v_d\}$ has cost $opt=1+(d+1)\varepsilon$. If \eqref{eq14-11-14-1} is used as the greedy criterion, then the output is $\bigcup_{i=1}^{d}\{u_{i},v_{i}\}$, whose cost is $c(\bigcup_{i=1}^{d}\{u_{i},v_{i}\})=d(1+\varepsilon)\approx d\cdot opt=\frac{n-2}{3}opt$. To overcome such a shortcoming, an idea is to choose some star $S_u$ to maximize $-\bigtriangleup_{S_u}p(D_1\cup D_2)/c(S_u)$ in each iteration. However, the computation of a most efficient star will take exponential time. This is why we define $p'$ as in \eqref{eq14-11-14-2} to be used in the greedy criterion. On one hand, $p'$ also plays the role of counting the reduction on components. On the other hand, a most efficient $S_u$ to maximize $p'_{D_1\cup D_2}(S_u)/c(S_u)$ can be found in polynomial time, which will be shown in Subsection \ref{subsec4.2}.

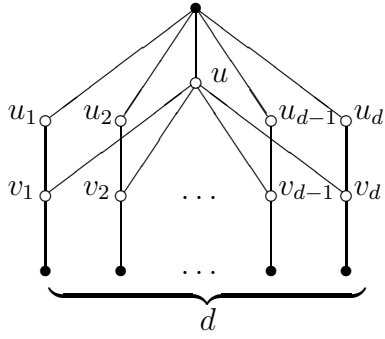
\begin{figure*}[!htbp]
\setlength{\unitlength}{1mm}
\begin{center}
\begin{picture}(100,70)
\put(50,65){\circle*{1.5}}
\put(50,55){\circle{1.5}}
\put(40,50){\circle{1.5}}
\put(30,50){\circle{1.5}}
\put(60,50){\circle{1.5}}
\put(70,50){\circle{1.5}}

\put(50,65){\line(0,-1){9.25}}
\put(50,65){\line(-2,-3){9.55}}
\put(50,65){\line(-4,-3){19.35}}
\put(50,65){\line(2,-3){9.55}}
\put(50,65){\line(4,-3){19.35}}

\put(40,40){\circle{1.5}}
\put(30,40){\circle{1.5}}
\put(60,40){\circle{1.5}}
\put(70,40){\circle{1.5}}
\put(40,40.75){\line(0,1){8.5}}
\put(30,40.75){\line(0,1){8.5}}
\put(60,40.75){\line(0,1){8.5}}
\put(70,40.75){\line(0,1){8.5}}

\put(30.5,40.5){\line(4,3){18.9}}
\put(40.5,40.5){\line(2,3){9.25}}
\put(59.5,40.5){\line(-2,3){9.25}}
\put(69.5,40.5){\line(-4,3){18.9}}
\put(48,28){\makebox(4,2)[tl]{\hbox{$\ldots$}}}
\put(48,38){\makebox(4,2)[tl]{\hbox{$\ldots$}}}
\put(30.5,28){$\underbrace{\makebox{}\qquad\qquad\qquad\qquad\qquad\,}$}
\put(50.5,23){\makebox(4,2)[tl]{\hbox{$d$}}}

\put(40,30){\circle*{1.5}}
\put(30,30){\circle*{1.5}}
\put(60,30){\circle*{1.5}}
\put(70,30){\circle*{1.5}}
\put(30,39.25){\line(0,-1){9}}
\put(40,39.25){\line(0,-1){9}}
\put(60,39.25){\line(0,-1){9}}
\put(70,39.25){\line(0,-1){9}}

\put(52,55){\makebox(4,2)[tl]{\hbox{$u$}}}
\put(25,50){\makebox(4,2)[tl]{\hbox{$u_{1}$}}}
\put(35,50){\makebox(4,2)[tl]{\hbox{$u_{2}$}}}
\put(61,50){\makebox(4,2)[tl]{\hbox{$u_{d-1}$}}}
\put(71,50){\makebox(4,2)[tl]{\hbox{$u_{d}$}}}
\put(25,40){\makebox(4,2)[tl]{\hbox{$v_{1}$}}}
\put(35,40){\makebox(4,2)[tl]{\hbox{$v_{2}$}}}

\put(61,40){\makebox(4,2)[tl]{\hbox{$v_{d-1}$}}}
\put(71,40){\makebox(4,2)[tl]{\hbox{$v_{d}$}}}

\end{picture}
\vskip -2.5cm\caption{Solid nodes represent nodes in $C$. The costs on circled nodes are $c(u_{1})=c(u_{2})=\ldots=c(u_{d})=1$,
$c(v_{1})=c(v_{2})=\ldots=c(v_{d})=\varepsilon$, and $c(u)=1+\varepsilon$.}\label{fig11-01-1}
\end{center}
\end{figure*}

\section{The Analysis of the Algorithm}\label{sec.5}

In this section, we analyze the performance ratio of Algorithm \ref{alg10-6-1}.

\subsection{The Analysis of $D_1$}

\begin{lemma}\label{lem14-11-14-3}
Function $q$ is a polymatroid.
\end{lemma}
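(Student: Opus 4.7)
The plan is to verify the three defining properties of a polymatroid for $q$: nonnegativity with $q(\emptyset)=0$, monotonicity, and submodularity. The first is immediate: when $C=\emptyset$ every vertex $u$ has $|N_C(u)|=0$, so $q_C(u)=m$, giving $\sum_u q_\emptyset(u)=m|V|$ and hence $q(\emptyset)=0$. Nonnegativity follows from the fact that $q_C(u)\leq m$ for every $u$.

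For monotonicity, I would show the stronger pointwise statement that $q_D(u)\leq q_C(u)$ for every $u\in V$ whenever $C\subseteq D$, which implies $q(C)\leq q(D)$ after summation. Split into cases according to whether $u\in C$, $u\in D\setminus C$, or $u\in V\setminus D$: the first two cases are trivial (the values are $0$), and the third uses $|N_C(u)|\leq |N_D(u)|$ together with the monotonicity of $x\mapsto\max\{m-x,0\}$.

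The key step is submodularity, and I would reduce it to an explicit formula for the marginal. A direct computation, splitting $V$ into $\{u\}$, $V\setminus(C\cup\{u\})$, and $C$, yields
\[
\bigtriangleup_u q(C)=q_C(u)+\bigl|\{v\in N_G(u)\setminus(C\cup\{u\}):|N_C(v)|<m\}\bigr|,
\]
because adjoining $u$ to $C$ zeroes out $q_C(u)$ and reduces $q_C(v)$ by exactly $1$ for each neighbor $v\notin C$ of $u$ whose current coverage is still below $m$. The same formula holds with $D$ in place of $C$. Given $C\subseteq D$ and $u\in V\setminus D$, the first term satisfies $q_C(u)\geq q_D(u)$ by the monotonicity argument above, and for the second term both the index set $N_G(u)\setminus(D\cup\{u\})$ shrinks and the condition $|N_D(v)|<m$ becomes more restrictive, so the cardinality can only decrease. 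Hence $\bigtriangleup_u q(C)\geq \bigtriangleup_u q(D)$.

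I expect the main obstacle to be bookkeeping rather than any genuine mathematical difficulty: the definition of $q_C(u)$ splits on whether $u\in C$, and the marginal involves both the vertex $u$ itself changing status and its neighbors gaining an additional dominator, so the derivation of the closed form for $\bigtriangleup_u q(C)$ has to be done carefully. Once that formula is in hand, each of the two summands is manifestly antitone in $C$, and submodularity drops out.
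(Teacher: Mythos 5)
Your proof is correct and follows essentially the same route as the paper's: the paper writes $q$ as a constant plus the sum of the per-vertex functions $-q_C(v)$ and invokes closure of monotonicity and submodularity under addition, whereas your explicit formula for $\bigtriangleup_u q(C)$ is precisely that sum of per-vertex marginals worked out term by term. There are no gaps; you have simply made explicit what the paper labels ``easy to see.''
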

\begin{proof}
Obviously, $q(\emptyset)=0$. It is easy to see that for any node $u\in V$, function $-q_C(u)$ is monotone increasing and submodular with respect to $C$. So, function $q$, being the summation of a constant function and some monotone increasing and submodular functions, is also monotone increasing and submodular.
\end{proof}

\begin{lemma}\label{le10-8-1}
The final node set $D_1$ in Algorithm \ref{alg10-6-1} is an $m$-fold dominating set of $G$.
\end{lemma}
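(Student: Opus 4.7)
The plan is to extract the $m$-fold domination condition directly from the stopping rule of the first while loop. Unpacking definitions, $D_1$ is an $m$-fold dominating set of $G$ if and only if $q_{D_1}(u)=0$ for every $u\in V\setminus D_1$, i.e., every node outside $D_1$ has at least $m$ neighbors inside $D_1$. So the goal reduces to showing this pointwise vanishing.

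The key step is to establish the inequality
\[
\bigtriangleup_u q(D_1)\;\geq\; q_{D_1}(u)\qquad\text{for every }u\in V\setminus D_1.
\]
This follows by a direct bookkeeping of how $q(C)=m|V|-\sum_{v\in V}q_C(v)$ changes when $u$ is inserted into $D_1$. Only two kinds of terms move: the term indexed by $u$ drops from $q_{D_1}(u)$ down to $0$ (since now $u\in C$), while for each neighbor $v\in N_G(u)\setminus D_1$ the quantity $|N_C(v)|$ grows by one and so $q_C(v)$ weakly decreases. The $u$-contribution alone accounts for a gain of $q_{D_1}(u)$ in $q$, while the neighbor contributions only add nonnegatively, yielding the claimed lower bound.

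Next, when the first while loop terminates, no $u\in V\setminus D_1$ satisfies $\bigtriangleup_u q(D_1)>0$. Since $q$ is a polymatroid by Lemma \ref{lem14-11-14-3}, monotonicity forces $\bigtriangleup_u q(D_1)\geq 0$ always, so at termination $\bigtriangleup_u q(D_1)=0$ for every $u\in V\setminus D_1$. Combined with the inequality above, $q_{D_1}(u)=0$ for every such $u$, which is exactly the $m$-fold domination condition. Termination of the loop itself is immediate since each iteration strictly enlarges the finite set $D_1$. I do not foresee a real obstacle; the only place that requires care is the bookkeeping step, where one must observe that discarding the nonnegative neighbor contributions still yields a valid lower bound on $\bigtriangleup_u q(D_1)$.
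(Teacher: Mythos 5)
Your proposal is correct and is essentially the paper's own argument: both hinge on the observation that adding $u$ to $D_1$ drops the term $q_{D_1}(u)=m-|N_{D_1}(u)|$ to zero while all other terms of the sum change nonnegatively by monotonicity, so $\bigtriangleup_u q(D_1)\geq q_{D_1}(u)$ and the loop cannot terminate while some node is under-dominated. The paper states this in contrapositive form and you state it as a direct inequality plus the termination condition, but the computation is the same.
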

\begin{proof}
If there exists a node $u\in V\backslash D_1$ with $|N_{D_1}(u)|<m$, then $-\bigtriangleup_uq_{D_1}(u)=-q_{D_1\cup\{u\}}(u)+q_{D_1}(u)=-0+(m-|N_{D_1}(u)|)>0$, and $-\bigtriangleup_uq_{D_1}(v)\geq 0$ for any node $v\in V\setminus \{u\}$ (by the monotonicity of $-q_{D_1}(v)$ with respect to $D_1$). Thus $\bigtriangleup_{u}q(D_1)=-\sum_{v\in V}\bigtriangleup_{u}q_{D_1}(v)>0$, and the algorithm does not terminate at this stage.
\end{proof}

\begin{theorem}\label{thm14-11-25-1}
The final node set $D_1$ in Algorithm \ref{alg10-6-1} has weight $w(D_1)\leq H(m+\delta)\cdot opt$, where $\delta$ is the maximum degree of graph $G$, and $opt$ is the optimal value for the $m$-MWCDS problem.
\end{theorem}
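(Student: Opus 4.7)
The plan is to recognize the first while-loop of Algorithm \ref{alg10-6-1} as precisely the greedy algorithm for a Submodular Cover instance with cost function $c$ and the polymatroid $q$ established in Lemma \ref{lem14-11-14-3}, and then to invoke Theorem \ref{thm-submodular-cover}. Three things then need verification: (i) the loop halts with $D_1\in\Omega_q$; (ii) the parameter $\gamma=\max_{u\in V}q(\{u\})$ equals $m+\delta$; and (iii) the optimum of the cover instance is upper bounded by $opt$.

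For (i), I would note that $q(C)\le m|V|$ for every $C$, with equality iff $q_C(u)=0$ for all $u\in V$, iff $C$ is an $m$-fold dominating set; Lemma \ref{le10-8-1} already guarantees the final $D_1$ is an $m$-DS, so $D_1\in\Omega_q$ automatically. For (ii), I would do the short degree count: a single-node set $\{u\}$ leaves residual demand $0$ at $u$, $\max\{m-1,0\}$ at each of the $\deg_G(u)$ neighbors, and $m$ at each non-neighbor; summing and subtracting from $m|V|$ yields $q(\{u\})=m+\deg_G(u)$, hence $\gamma=m+\delta$. For (iii), any feasible $m$-DS (and in particular any $(1,m)$-CDS, since the statement's $opt$ may be read either way) lies in $\Omega_q$, so the optimum of the cover instance is at most $opt$.

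Putting these together with Theorem \ref{thm-submodular-cover} gives
$$
w(D_1)\;\le\;H(\gamma)\cdot opt_{\mathrm{cover}}\;\le\;H(m+\delta)\cdot opt,
$$
which is the desired bound. I do not foresee any serious obstacle: the polymatroid property of $q$ is already in hand, the identification of Algorithm \ref{alg10-6-1}'s first phase with the generic greedy rule is transparent from the pseudocode (line 3 maximizes exactly $\bigtriangleup_u q(D_1)/c(u)$), and the only genuine calculation is the one-line degree count for $\gamma$. The lone subtlety worth flagging is distinguishing the cover optimum (minimum-cost $m$-DS) from the $(1,m)$-MWCDS optimum; feasibility of the latter for the former resolves this immediately.
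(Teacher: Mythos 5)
Your proposal is correct and follows exactly the paper's own argument: both identify the first phase as the greedy algorithm for the submodular cover instance with polymatroid $q$, compute $\gamma=\max_{u}q(\{u\})=m+\delta$ by the same degree count, and bound the cover optimum by $opt$ via feasibility of any $(1,m)$-CDS as an $m$-DS. No gaps.
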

\begin{proof}
By Lemma \ref{lem14-11-14-3} and Lemma \ref{le10-8-1}, the minimum weight $m$-fold dominating set problem is a special submodular cover problem with potential function $q$. So by Theorem \ref{thm-submodular-cover}, we have $w(D_1)\leq H(\gamma)\cdot opt'$, where $opt'$ is the optimal value for the $m$-MWDS problem. Then, the result follows from the observation that $opt'\leq opt$ and $\gamma=\max\limits_{u\in V}\{q(\{u\})\}=\max\limits_{u\in V}\{m|V|-\sum_{v\in V\setminus\{u\}}(m-|N_{\{u\}}(v)|)\}=\max\limits_{u\in V}\{m+d_G(u)\}=m+\delta$.
\end{proof}

\subsection{The Computation of an Optimal Star for Greedy Choice}\label{subsec4.2}

The idea for the definition of $p'_C(S_u)$ is as follows: Adding node $u$ into $C$ will merge those components in $NC_C(u)$ into one component of $G[C\cup \{u\}]$, say $\widetilde C$, which reduces the number of components by $|NC_C(u)|-1$. Then adding nodes in $S_u\setminus \{u\}$ sequentially according to the increasing order of their costs. Notice that \begin{equation}\label{eq14-11-22-5}
-\bigtriangleup_{u_i}p(C\cup\{u_1,\ldots,u_{i-1}\})=|NC_C(u_i)\setminus NC_C(u,u_1,\ldots,u_{i-1})|
\end{equation}
is the number of components newly merged into $\widetilde C$. So, the term in the summation of definition \eqref{eq14-11-14-2} indicates that if adding $u_i$ merges at least one more component, we regard its contribution to $p'_C(S_u)$ as one. The advantage of such a counting is that an optimal star for the greedy criterion is polynomial-time computable. This claim is based on the following lemma. A simple relation will be used in the proof: for four positive real numbers $a,b,c,d$
\begin{equation}\label{eq14-11-16-4}
\frac{a+b}{c+d}\geq \frac{b}{d}\Rightarrow \frac{a}{c}\geq \frac{b}{d}.
\end{equation}

\begin{lemma}\label{le10-30-1}
Suppose $C$ is a dominating set of graph $G$. Then, there exists a most efficient star $S_u$ with respect to $C$ such that $|NC_C(v)|=1$ for every node $v\in S_u\setminus \{u\}$. Furthermore, if we denote by $C_v$ the unique component in $NC_C(v)$, then components in $\{C_v\}_{v\in S_u\setminus\{u\}}$ are all distinct and they are also distinct from those components in $NC_C(u)$.
\end{lemma}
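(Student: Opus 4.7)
The plan is to prove the lemma by an exchange argument driven by the averaging inequality \eqref{eq14-11-16-4}. I would begin with an arbitrary most efficient star $S_u^{*}=\{u,u_{1},\ldots,u_{s}\}$ (with the center allowed to vary over all of $V$), and show that it can be transformed, without losing efficiency, into a most efficient star satisfying: (i) every leaf $v\in S_u^{*}\setminus\{u\}$ has $|NC_{C}(v)|=1$; (ii) the components $C_{v}$ are pairwise distinct; and (iii) no $C_{v}$ belongs to $NC_{C}(u)$.

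First I would show that in a most efficient star every leaf contributes exactly $1$ to $p'_{C}$. If the contribution of some $u_{i}$ were $0$, then by \eqref{eq14-11-22-5} we would have $NC_{C}(u_{i})\subseteq NC_{C}(u,u_{1},\ldots,u_{i-1})$, so dropping $u_{i}$ leaves $NC_{C}(u,u_{1},\ldots,u_{j-1})$ unchanged for every $j>i$ and therefore leaves every subsequent term in the sum of \eqref{eq14-11-14-2} unchanged; then $p'_{C}$ is unchanged while $c$ strictly decreases, contradicting the choice of $S_u^{*}$. Next, suppose some leaf $v\in S_u^{*}\setminus\{u\}$ has $|NC_{C}(v)|\ge 2$. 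I would split $S_u^{*}$ into $A=S_u^{*}\setminus\{v\}$ and the singleton star $B=\{v\}$ centered at $v$. Because removing $v$ only shrinks the history sets $NC_{C}(u,u_{1},\ldots,u_{j-1})$ for the indices $j$ processed after $v$, every later leaf's $\min\{1,\cdot\}$ term in $A$ is at least its value in $S_u^{*}$, yielding $p'_{C}(A)\ge p'_{C}(S_u^{*})-1$. Since $p'_{C}(B)=|NC_{C}(v)|-1\ge 1$, we obtain
\[
p'_{C}(A)+p'_{C}(B)\ge p'_{C}(S_u^{*})\quad\text{and}\quad c(A)+c(B)=c(S_u^{*}),
\]
so by \eqref{eq14-11-16-4} at least one of $A$, $B$ has efficiency at least $p'_{C}(S_u^{*})/c(S_u^{*})$ and is itself most efficient. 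Replacing $S_u^{*}$ by this star either deletes the offending leaf $v$ (decreasing the number of leaves with $|NC_{C}|\ge 2$ by one) or replaces $S_u^{*}$ by the singleton $\{v\}$, for which $S_u^{*}\setminus\{u\}=\emptyset$ makes (i)--(iii) vacuous. Iterating over the finitely many bad leaves produces a most efficient star satisfying (i).

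Properties (ii) and (iii) then come for free: if two distinct leaves shared the same $C_{v}$, or some $C_{v}$ lay in $NC_{C}(u)$, then the leaf processed later in the cost order would enter with its unique component neighbor already contained in the running history, making its contribution $0$ and contradicting the first step. The main obstacle I anticipate is cleanly justifying $p'_{C}(A)\ge p'_{C}(S_u^{*})-1$: it rests on the sequential cost-ordered definition \eqref{eq14-11-14-2} together with the observation that, for each leaf processed after $v$, shrinking the history set can only increase (or leave unchanged) the $\min\{1,|NC_{C}(u_{j})\setminus NC_{C}(\cdot)|\}$ term, so the sum of later contributions does not drop when $v$ is deleted.
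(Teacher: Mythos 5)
Your proof is correct and follows essentially the same route as the paper's: the same first step showing every leaf of a most efficient star must contribute exactly one to $p'_C$ (else drop it and strictly improve the efficiency), the same exchange on a leaf with two or more component neighbors, and the same reading of that first step to obtain the distinctness of the components $C_v$ from each other and from $NC_C(u)$. The only cosmetic difference is in the exchange step, where you average the efficiency over $A=S_u\setminus\{v\}$ and $B=\{v\}$ to conclude that one of them is already most efficient, whereas the paper compares $S_u\setminus\{u_i\}$ with $S_u$ directly via \eqref{eq14-11-16-4} and concludes that the singleton $\{u_i\}$ itself is most efficient; both are valid instances of the same mediant inequality.
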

\begin{proof}
Suppose $S_u$ is a most efficient star with $S_u\setminus \{u\}=\{u_1,\ldots,u_s\}$ such that $c(u_1)\leq \ldots\leq c(u_s)$. We first show that for any $i=1,\ldots,s$,
\begin{equation}\label{eq14-11-22-2}
-\bigtriangleup_{u_{i}}p(C\cup\{u,u_{1},\ldots,u_{i-1}\})\geq 1.
\end{equation}
Suppose this is not true. Let $i$ be the first index with $-\bigtriangleup_{u_{i}}p(C\cup\{u,u_{1},\ldots,u_{i-1}\})=0$. Then by \eqref{eq14-11-22-5}, we have $NC_C(u_i)\subseteq NC_C(u,u_1,\ldots,u_{i-1})$ and thus $NC_C(u,u_1,\ldots,u_{i-1})=NC_C(u,u_1,\ldots,u_i)$. It follows that for any $j>i$,
$$
-\bigtriangleup_{u_{j}}p(C\cup\{u,u_{1},\ldots,u_{i-1},u_{i+1},\ldots,u_{j-1}\})=-\bigtriangleup_{u_{i}}p(C\cup\{u,u_{1},\ldots,u_{j-1}\}).
$$
So, for the star $S_u'=S_u\setminus\{u_i\}$, we have $p'_C(S'_u)=p'_C(S_u)$ and thus $p'_C(S'_u)/c(S'_u)>p'_C(S_u)/c(S_u)$, contradicting the maximality of $p_C'(S_u)/c(S_u)$.

As a consequence of \eqref{eq14-11-22-2},
$$
p_C'(S_u)=|NC_C(u)|-1+s.
$$

Suppose there is a node $u_i$ with $|NC_C(u_i)|\geq 2$, we choose $u_i$ to be such that $i$ is as small as possible. Let $S_u'=S_u\setminus \{u_i\}$. Notice that for any $j>i$,
$$
-\bigtriangleup_{u_j}p(C\cup\{u,u_{1},\ldots,u_{i-1},u_{i+1},\ldots,u_{j-1}\})\geq -\bigtriangleup_{u_{i}}p(C\cup\{u,u_{1},\ldots,u_{j-1}\}).
$$
Combining this with \eqref{eq14-11-22-2}, we have
$$
p_C'(S_u')=|NC_C(u)|-1+(s-1)=p_C'(S_u)-1.
$$
By the maximality of $S_u$, we have
$$
\frac{p_C'(S_u')}{c(S_u')}\leq \frac{p_C'(S_u)}{c(S_u)}=\frac{p_C'(S_u')+1}{c(S_u')+c(u_i)}.
$$
Then by \eqref{eq14-11-16-4},
$$
\frac{p_C'(S_u')}{c(S_u')}\leq \frac{1}{c(u_i)}.
$$
It follows that
$$
\frac{p_C'(S_u)}{c(S_u)}=\frac{p_C'(S_u')+1}{c(S_u')+c(u_i)}\leq \frac{\frac{1}{c(u_i)}c(S_u')+1}{c(S_u')+c(u_i)}=\frac{1}{c(u_i)}\leq \frac{p_C'(u_i)}{c(u_i)},
$$
where the last inequality holds because $p_C'(u_i)=|NC_C(u_i)|-1\geq 1$. Hence $u_i$ is a also a most efficient star. It is a trivial star satisfing the requirement of the lemma.

Next, suppose $S_u$ is a nontrivial star in which every $v\in S_u\setminus\{u\}$ has $|NC_C(v)|=1$. Notice that an equivalent statement of \eqref{eq14-11-22-2} is that $|NC_C(u_i)\setminus NC_C(u,u_1,\ldots,u_{i-1})|\geq 1$ for any $i=1,\ldots,s$. The second part of this lemma follows.
\end{proof}

By Lemma \ref{le10-30-1}, a most efficient star with respect to $C$ can be found in the following way. Guessing the center of the star requires time $O(n)$. Suppose $u$ is the guessed center. Let $N^{(u)}=\{v\colon v\ \mbox{is a neighbor of}\ u\ \mbox{in}\ G\ \mbox{and}\ |NC_C(v)|=1\}$. Order the nodes in $N^{(u)}$ as $u_1,\ldots,u_s$ such that $c(u_1)\leq \ldots\leq c(u_s)$. For $i=1,\ldots,s$, scan $u_i$ sequentially. If $u_i$ has $NC_C(u_i)\subseteq NC_C(u,u_1,\ldots,u_{i-1})$, then remove it from $N^{(u)}$. For convenience of statement, suppose the remaining set $N^{(u)}=\{u_1,u_2\ldots,u_t\}$. Then, the node set of a most efficient star centered at $u$ must be of the form $\{u,u_1,u_2,\ldots,u_l\}$ for some $l\in\{0,\ldots,t\}$. So, it suffices to compute the efficiency of the $t+1$ sets $\{u,u_1,\ldots,u_l\}$ for $l=0,\ldots,t$ and choose the most efficient one from them. Clearly, such a computation can be done in polynomial time.

\subsection{Correctness of the Algorithm}

The following result is a folklore in the study of CDS (see, for example, \cite{Wan}).

\begin{lemma}\label{lem14-11-22-6}
Suppose $D$ is a dominating set of $G$ such that $G[D]$ is not connected. Then, two nearest components of $G[D]$ are at most three hops away.
\end{lemma}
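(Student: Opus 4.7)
The plan is to prove by contradiction: let $C_1$ and $C_2$ be a pair of components of $G[D]$ achieving the minimum inter-component distance in $G$, set $d = \mathrm{dist}_G(C_1, C_2)$, and show $d \le 3$ by extracting a contradiction from the dominating property if $d \ge 4$.

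First I would fix a shortest path $P : u_0, u_1, \ldots, u_d$ with $u_0 \in C_1$ and $u_d \in C_2$, and observe that all interior vertices $u_1, \ldots, u_{d-1}$ lie in $V \setminus D$. Indeed, if some interior $u_i$ were in $D$ it would belong to some component $C'$ of $G[D]$: if $C' = C_1$ then $\mathrm{dist}_G(C_1, C_2) \le d - i < d$, if $C' = C_2$ then $\mathrm{dist}_G(C_1, C_2) \le i < d$, and otherwise $\mathrm{dist}_G(C_1, C') \le i < d$ would contradict minimality of $d$ over all pairs of components.

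Next, assume toward contradiction that $d \ge 4$. Then the interior node $u_2$ exists and lies in $V \setminus D$, so since $D$ is a dominating set of $G$, there is some $v \in D$ adjacent to $u_2$; let $C_3$ be the component of $G[D]$ containing $v$. I would split into three cases based on the identity of $C_3$. If $C_3 = C_1$, then the walk $v, u_2, u_3, \ldots, u_d$ gives $\mathrm{dist}_G(C_1, C_2) \le 1 + (d - 2) = d - 1$, a contradiction. If $C_3 = C_2$, then $u_0, u_1, u_2, v$ gives $\mathrm{dist}_G(C_1, C_2) \le 3 < d$, again a contradiction. Otherwise $C_3$ is a third component, and the same length-$3$ walk yields $\mathrm{dist}_G(C_1, C_3) \le 3 < d$, contradicting the minimality of $d$ across all component pairs.

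The argument is essentially elementary; the only subtlety is choosing the right interior vertex to dominate, namely $u_2$ (so that the three-edge detour through its dominator $v$ is genuinely shorter than $d$ whenever $d \ge 4$), and being careful to rule out the case $C_3 \in \{C_1, C_2\}$ separately from the case that $C_3$ is a new component. There is no real obstacle, just bookkeeping on the three possible locations of $v$.
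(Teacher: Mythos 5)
Your proof is correct. The paper itself does not prove this lemma---it states it as a folklore result and simply cites \cite{Wan}---so there is no in-paper argument to compare against; your contradiction argument (take the closest pair of components $C_1,C_2$ at distance $d$, note all interior vertices of a shortest connecting path lie outside $D$ by minimality of $d$, then dominate $u_2$ and derive a shorter component-to-component distance in each of the three cases when $d\geq 4$) is exactly the standard folklore proof and is complete.
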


\begin{theorem}
The output $D_G$ of Algorithm \ref{alg10-6-1} is a $(1,m)$-CDS.
\end{theorem}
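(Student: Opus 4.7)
The plan is to verify the two defining properties of a $(1,m)$-CDS separately. The easier half is $m$-fold domination: Lemma \ref{le10-8-1} already gives that $D_1$ is an $m$-fold dominating set, and since $D_G = D_1 \cup D_2 \supseteq D_1$, every $u \in V \setminus D_G$ still has $|N_G(u) \cap D_G| \geq |N_G(u) \cap D_1| \geq m$, so this property carries over for free.

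For connectivity I would argue by contradiction: assume that when the second while-loop terminates, $G[D_G]$ is disconnected. Since $D_G$ contains the dominating set $D_1$, it is itself a dominating set, so Lemma \ref{lem14-11-22-6} supplies two components $C_1, C_2$ of $G[D_G]$ whose distance in $G$ is at most $3$. Two distinct components of $G[D_G]$ cannot be at distance $1$ in $G$, since any $G$-edge joining two nodes of $D_G$ would already lie in $G[D_G]$. So the distance is exactly $2$ or $3$, and in either case the goal is to exhibit a star $S_u \subseteq V \setminus D_G$ with $p'_{D_G}(S_u) > 0$, contradicting the termination condition of the second loop.

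In the distance-$2$ case, take a shortest path $u_1 - v - u_2$ with $u_i \in C_i$. The midpoint $v$ cannot lie in $D_G$, for otherwise $v$ would merge $C_1$ and $C_2$ into a single component. The trivial star $S_v = \{v\}$ is then legal, and by \eqref{eq14-11-14-2} satisfies $p'_{D_G}(S_v) = |NC_{D_G}(v)| - 1 \geq 1$, since $v$ is adjacent to nodes of both $C_1$ and $C_2$. In the distance-$3$ case, take a shortest path $u_1 - v_1 - v_2 - u_2$ with $u_i \in C_i$; a quick check rules out $v_1, v_2 \in D_G$, since any such choice would collapse the distance between $C_1$ and $C_2$ to at most $2$. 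The star with center $v_1$ and $S_{v_1} \setminus \{v_1\} = \{v_2\}$ is therefore legal, and evaluating \eqref{eq14-11-14-2} gives $|NC_{D_G}(v_1)| \geq 1$ (witnessed by $C_1$) together with $-\bigtriangleup_{v_2} p(D_G \cup \{v_1\}) \geq 1$, because adding $v_2$ absorbs $C_2$ into the component that already contains $C_1 \cup \{v_1\}$. Hence $p'_{D_G}(S_{v_1}) \geq (1-1) + 1 = 1$.

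The only real subtlety is in the $3$-hop case: one must argue that both $v_1$ and $v_2$ avoid $D_G$, and that the ordered insertion prescribed by \eqref{eq14-11-14-2} really registers a net component reduction of at least $1$ even though the first addition $v_1$ alone may contribute $0$ to $p'_{D_G}$. Both points follow cleanly from the shortest-path choice of $v_1, v_2$, so beyond some careful bookkeeping of $p'_{D_G}$ I do not anticipate a deeper obstacle.
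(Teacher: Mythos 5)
Your proof is correct and follows the same route as the paper: domination is inherited from $D_1$ via Lemma \ref{le10-8-1}, and connectivity is forced by Lemma \ref{lem14-11-22-6}, which yields a path of length $2$ or $3$ between two nearest components whose internal nodes form a star $S_{u_1}$ with $p'_{D_G}(S_{u_1})>0$, contradicting termination of the second loop. You simply spell out the details (internal path nodes lie outside $D_G$, and the evaluation of \eqref{eq14-11-14-2} in the $3$-hop case) that the paper leaves implicit.
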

\begin{proof}
By Lemma \ref{le10-8-1}, $D_1$ is an $m$-fold DS, and thus $D_G$ is also an $m$-DS. If $G[D_G]$ is not connected, consider two nearest components of $G[D_G]$, say $G_1$ and $G_2$. Let $P=u_0u_1\ldots u_t$ be a shortest path between $G_1$ and $G_2$, where $u_0\in V(G_1)$ and $u_t\in V(G_2)$. By Lemma \ref{lem14-11-22-6}, we have $t=2$ or $3$. Then, $u_1$ (in the case $t=2$) or $u_1u_2$ (in the case $t=3$) is a star $S_{u_1}$ with $p_{D_G}'(S_{u_1})>0$. The algorithm will not terminate.
\end{proof}

\subsection{Decomposition of Optimal Solution}

The following lemma, as well as its proof, can be illustrated by Fig.\ref{fig11-1-2}.

\begin{lemma}\label{lem10-6-3}
Let $G$ be a connected graph, $C$ be a dominating set of $G$ and $C^{*}$ be a connected dominating set of $G$. Then $C^{*}\backslash C$ can be decomposed into the union of node sets $C^{*}\backslash C=Y_0\cup Y_{1}\cup Y_{2}\cup,\ldots,\cup Y_{h}$ such that:

$(\romannumeral1)$ for $1\leq i\leq h$, subgraph $G[Y_{i}]$ contains a star;

$(\romannumeral2)$ subgraph $G[C\cup Y_{1}\cup Y_{2}\cup,\ldots,\cup Y_{h}]$ is connected;

$(\romannumeral3)$ for $1\leq i\leq h$, $|NC_C(Y_i)|\geq 2$;

$(\romannumeral4)$ any node of $C^{*}\backslash C$ belongs to at most two sets of $\{Y_0,Y_{1},Y_{2},\ldots,Y_{h}\}$.
\end{lemma}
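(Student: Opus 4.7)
The plan is to extract the decomposition from the combinatorial structure of an auxiliary spanning tree of a contracted graph, sending the ``useless'' plain leaves into $Y_{0}$ and letting each $Y_{i}$ ($i\ge 1$) be a connected clump of leftover plain vertices that automatically bridges several components of $G[C]$.

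First I would verify that $G[C\cup C^{*}]$ is connected: $G[C^{*}]$ is connected because $C^{*}$ is a CDS, and every vertex of $C\setminus C^{*}$ has a neighbour in $C^{*}$ because $C^{*}$ is dominating. Contracting each component of $G[C]$ inside $G[C\cup C^{*}]$ to a single super-node then yields a connected auxiliary graph $H$ on vertex set $\mathcal{K}\cup (C^{*}\setminus C)$, where $\mathcal{K}$ denotes the set of components of $G[C]$; in $H$ no edge joins two super-nodes, so every edge has at least one plain endpoint in $C^{*}\setminus C$. I would then fix any spanning tree $T$ of $H$ and iteratively delete every leaf of $T$ that is a plain vertex, dumping each deleted vertex into $Y_{0}$. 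Since removing a leaf preserves connectivity, the pruned tree is still a tree, every leaf is now a super-node, and every surviving plain vertex has $T$-degree at least $2$. Finally, I would define $Y_{1},\ldots,Y_{h}$ to be the vertex sets of the connected components of the subgraph of (pruned) $T$ induced by the surviving plain vertices.

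Properties (i), (ii), (iv) are then mostly direct: (i) each $Y_{i}$ is a non-empty subtree of $T$, so $G[Y_{i}]$ contains a star (an edge of the subtree, or a single vertex when $|Y_{i}|=1$); (ii) pruned $T$ is connected on $\mathcal{K}\cup Y_{1}\cup\cdots\cup Y_{h}$, and since each super-node represents an internally connected component of $G[C]$, un-contracting shows $G[C\cup Y_{1}\cup\cdots\cup Y_{h}]$ is connected; (iv) the sets $Y_{0},Y_{1},\ldots,Y_{h}$ form a partition of $C^{*}\setminus C$, so each plain vertex lies in exactly one of them. The main obstacle is property (iii). Let $\partial Y_{i}$ be the set of super-nodes of pruned $T$ adjacent to $Y_{i}$, and let $T_{i}$ be the subtree of pruned $T$ induced by $Y_{i}\cup\partial Y_{i}$. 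Because $Y_{i}$ is a \emph{maximal} connected set of plain vertices, every $T$-neighbour of a vertex in $Y_{i}$ lies in $Y_{i}\cup\partial Y_{i}$; hence the degree of any plain vertex in $T_{i}$ equals its degree in pruned $T$ and is at least $2$, so no plain vertex can be a leaf of $T_{i}$. Since any tree on at least two vertices has at least two leaves, and here all leaves of $T_{i}$ must be super-nodes, I conclude $|NC_{C}(Y_{i})|\ge |\partial Y_{i}|\ge 2$.
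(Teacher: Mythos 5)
Your construction opens exactly as the paper's does (contract the components of $G[C]$ into super-nodes, take a spanning tree of the contracted graph, prune the plain leaves into $Y_0$), but it then diverges in a way that loses the one property the lemma exists to provide. You define each $Y_i$ ($i\geq 1$) as a \emph{maximal connected clump} of surviving Steiner vertices. Such a clump is a subtree of the pruned tree, but in general it is not a star: if the optimal solution joins two components of $G[C]$ through a path $w_1w_2w_3w_4$ of nodes of $C^*\setminus C$, your decomposition returns the whole path as one $Y_i$, and $G[Y_i]$ has no vertex adjacent to all the others. You read condition $(\romannumeral1)$ as ``$G[Y_i]$ contains \emph{some} star,'' which is vacuously true (a single node is a star by the paper's convention), but the way the lemma is used in Theorem \ref{lem10-10-1} makes clear that each $Y_j$ must itself be the node set of a star $S_{v_j}$: the charging argument needs $w_i\leq c(Y_j)/p'_{D_G^{(i-1)}}(Y_j)$ (inequality \eqref{eq14-11-24-3}), which holds only because $Y_j$ is a legal candidate for the greedy step, and $p'$ is only defined for stars. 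Your clumps are not legal candidates, so the downstream analysis collapses. The tell-tale sign is your property $(\romannumeral4)$: you obtain a genuine partition (each node in exactly one set), whereas the lemma only promises ``at most two'' --- precisely because the paper must further break each Steiner subtree into parent--children stars by rooting it, after which every non-root node is counted once as a centre and once as a child.

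Completing the argument requires the extra work the paper does: first modify the tree so that every Steiner node has a terminal (super-node) neighbour (condition \eqref{eq14-11-23-2}, whose proof is where the hypothesis that $C$ is dominating is actually used); then split off at non-leaf terminals into full components, root each Steiner subtree, and take $Y_v=\{v\}\cup\{\mbox{children of }v\}$; finally keep only those stars with at least two component neighbours. Condition $(\romannumeral3)$ must then be re-established for these individual stars, and condition \eqref{eq14-11-23-2} is exactly what guarantees that discarding the stars with fewer than two component neighbours orphans no node. Your verification of $(\romannumeral3)$ for the clumps (all leaves of the clump-plus-boundary subtree are super-nodes, hence at least two of them) is correct as far as it goes, but it does not transfer to the stars of the refined decomposition without that modification step.
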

\begin{proof}
Let $H$ be the graph obtained from $G[C\cup C^{*}]$ by contracting every component of $G[C]$ into a super-node (call it a terminal node). Since $G[C\cup C^{*}]$ is connected, $H$ is also connected, and thus $H$ contains a spanning tree $T$. Recursively pruning non-terminal leaves, we obtain a tree $T'$ in which every leaf is a terminal node (see Fig.\ref{fig11-1-2}(a)(b)). Let $Y_0$ be the set of pruned nodes. We may assume that
\begin{equation}\label{eq14-11-23-2}
\mbox{every non-terminal node has at least one terminal neighbor in $T'$.}
\end{equation}
In fact, if this is not true, then we can modify $T'$ into another tree satisfying this assumption. See Fig.\ref{fig11-1-2}(c)(d) for an illustration. In $(c)$, node $u$ does not have a terminal neighbor in $T'$. Since $C$ is a dominating set, $u$ is adjacent with some component of $G[C]$, say the component corresponding to terminal node $v$. Adding edge $uv$ creates a unique cycle in $T'+uv$. Removing the edge on this cycle which is incident with $u$ in $T'$, we have another tree in which $u$ has a terminal neighbor (see Fig.\ref{fig11-1-2}(d)). Notice that the removed edge is between two non-terminal nodes. So, the number of non-terminal nodes which have no terminal neighbors is strictly reduced. Recursively making such a modification results in a tree satisfying assumption \eqref{eq14-11-23-2}.

Tree $T'$ can be viewed as a Steiner tree. By splitting off at non-leaf terminal nodes, $T'$ can be decomposed into full components $T_1',\ldots,T_l'$ (a {\em full component} in a Steiner tree is a subtree in which a node is a leaf if and only if it is a terminal node, see Fig.\ref{eq14-11-23-2}(e)). Let $T_i$ be the subtree of $T_i'$ induced by those non-terminal nodes.

For each $i\in\{1,\ldots,l\}$, let $v_{i}$ be an arbitrary node of $T_{i}$ and view $T_i$ as a tree rooted at $v_i$. For each node $v\in V(T_i)$, let $S_v$ be the star centered at node $v$ which contains all children of $v$ in $T_i$. Let $\mathcal S_i=\{S_v\colon v\in V(T_i)\ \mbox{and}\ |NC_C(S_v)|\geq 2\}$. Let $\mathcal S=\bigcup_{i=1}^l\mathcal S_i$. Then $\{Y_v=V(S_v)\colon S_v\in\mathcal S\}$ is a desired decomposition of $C^*\setminus (C\cup Y_0)$ (see Fig.\ref{eq14-11-23-2}(f)).

It should be noted that property \eqref{eq14-11-23-2} is used to guarantee that no node is missed in the decomposition. For example, in Fig.\ref{eq14-11-23-2}(c) which does not satisfy this assumption, if we choose $x$ to be the root of the lower subtree, then $S_u=uy$ and $S_y=y$ are stars which does not have at least two terminal neighbors in $T'$, and thus they are excluded from $\mathcal S$. But then, node $y$ does not belong to any star in the decomposition.
\end{proof}

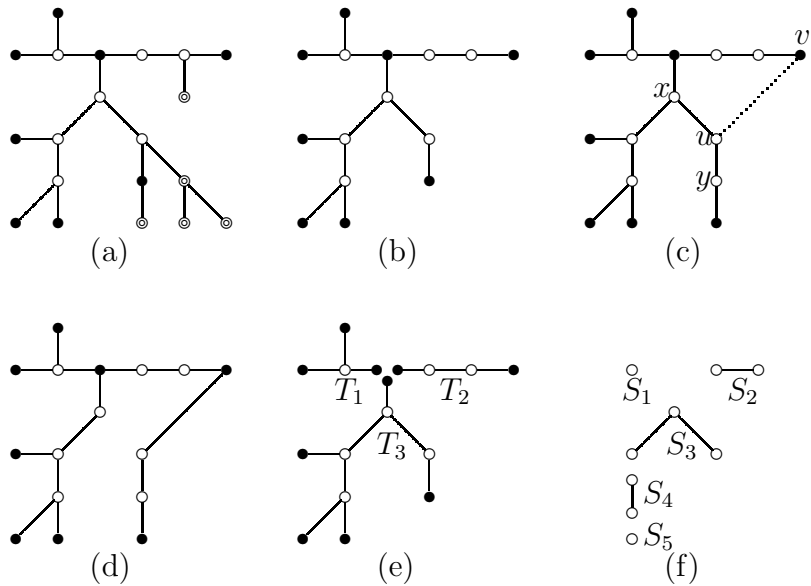
\begin{figure*}[!htbp]\setlength{\unitlength}{0.28mm}
 \begin{center}
 \hskip 0cm\begin{picture}(110,120)
 \put(20,110){\circle*{5}}
 \put(0,90){\circle*{5}}
 \put(20,90){\circle{5}}
 \put(40,90){\circle*{5}}
 \put(60,90){\circle{5}}
 \put(80,90){\circle{5}}
 \put(100,90){\circle*{5}}
 \put(40,70){\circle{5}}
 \put(80,70){\circle{3}}\put(80,70){\circle{5}}
 \put(0,50){\circle*{5}}
 \put(20,50){\circle{5}}
 \put(60,50){\circle{5}}
 \put(20,30){\circle{5}}
 \put(60,30){\circle*{5}}
 \put(80,30){\circle{3}}\put(80,30){\circle{5}}
 \put(0,10){\circle*{5}}
 \put(20,10){\circle*{5}}
 \put(60,10){\circle{3}}\put(60,10){\circle{5}}
 \put(80,10){\circle{3}}\put(80,10){\circle{5}}
 \put(100,10){\circle{3}}\put(100,10){\circle{5}}
 \qbezier(20,107)(20,100)(20,93)
 \qbezier(3,90)(10,90)(17,90)
 \qbezier(23,90)(30,90)(37,90)
 \qbezier(43,90)(50,90)(57,90)
 \qbezier(63,90)(70,90)(77,90)
 \qbezier(83,90)(90,90)(97,90)
 \qbezier(40,87)(40,80)(40,73)
 \qbezier(80,87)(80,80)(80,73)
 \qbezier(38,68)(30,60)(22,52)
 \qbezier(42,68)(50,60)(58,52)
 \qbezier(3,50)(10,50)(17,50)
 \qbezier(20,47)(20,40)(20,33)
 \qbezier(60,47)(60,40)(60,33)
 \qbezier(62,48)(70,40)(78,32)
 \qbezier(18,28)(10,20)(2,12)
 \qbezier(20,27)(20,20)(20,13)
 \qbezier(60,27)(60,20)(60,13)
 \qbezier(80,27)(80,20)(80,13)
 \qbezier(82,28)(90,20)(98,12)
 \put(35,-8){(a)}
 \end{picture}
 \hskip 0.6cm\begin{picture}(110,120)
 \put(20,110){\circle*{5}}
 \put(0,90){\circle*{5}}
 \put(20,90){\circle{5}}
 \put(40,90){\circle*{5}}
 \put(60,90){\circle{5}}
 \put(80,90){\circle{5}}
 \put(100,90){\circle*{5}}
 \put(40,70){\circle{5}}
 \put(0,50){\circle*{5}}
 \put(20,50){\circle{5}}
 \put(60,50){\circle{5}}
 \put(20,30){\circle{5}}
 \put(60,30){\circle*{5}}
 \put(0,10){\circle*{5}}
 \put(20,10){\circle*{5}}
 \qbezier(20,107)(20,100)(20,93)
 \qbezier(3,90)(10,90)(17,90)
 \qbezier(23,90)(30,90)(37,90)
 \qbezier(43,90)(50,90)(57,90)
 \qbezier(63,90)(70,90)(77,90)
 \qbezier(83,90)(90,90)(97,90)
 \qbezier(40,87)(40,80)(40,73)
 \qbezier(38,68)(30,60)(22,52)
 \qbezier(42,68)(50,60)(58,52)
 \qbezier(3,50)(10,50)(17,50)
 \qbezier(20,47)(20,40)(20,33)
 \qbezier(60,47)(60,40)(60,33)
 \qbezier(18,28)(10,20)(2,12)
 \qbezier(20,27)(20,20)(20,13)
 \put(35,-8){(b)}
 \end{picture}
 \hskip 0.6cm\begin{picture}(110,60)
 \put(20,110){\circle*{5}}
 \put(0,90){\circle*{5}}
 \put(20,90){\circle{5}}
 \put(40,90){\circle*{5}}
 \put(60,90){\circle{5}}
 \put(80,90){\circle{5}}
 \put(100,90){\circle*{5}}
 \put(40,70){\circle{5}}
 \put(0,50){\circle*{5}}
 \put(20,50){\circle{5}}
 \put(60,50){\circle{5}}
 \put(20,30){\circle{5}}
 \put(60,30){\circle{5}}
 \put(0,10){\circle*{5}}
 \put(20,10){\circle*{5}}
 \put(60,10){\circle*{5}}
 \qbezier(20,107)(20,100)(20,93)
 \qbezier(3,90)(10,90)(17,90)
 \qbezier(23,90)(30,90)(37,90)
 \qbezier(43,90)(50,90)(57,90)
 \qbezier(63,90)(70,90)(77,90)
 \qbezier(83,90)(90,90)(97,90)
 \qbezier(40,87)(40,80)(40,73)
 {\linethickness{0.25mm}\qbezier[18](98,88)(80,70)(62,52)}
 \qbezier(38,68)(30,60)(22,52)
 \qbezier(42,68)(50,60)(58,52)
 \qbezier(3,50)(10,50)(17,50)
 \qbezier(20,47)(20,40)(20,33)
 \qbezier(60,47)(60,40)(60,33)
 \qbezier(18,28)(10,20)(2,12)
 \qbezier(20,27)(20,20)(20,13)
 \qbezier(60,27)(60,20)(60,13)
 \put(30,70){$x$}\put(50,28){$y$}\put(50,47){$u$}\put(97,95){$v$}
 \put(35,-8){(c)}
 \end{picture}
 \vskip 0.8cm
 \begin{picture}(110,120)
 \put(20,110){\circle*{5}}
 \put(0,90){\circle*{5}}
 \put(20,90){\circle{5}}
 \put(40,90){\circle*{5}}
 \put(60,90){\circle{5}}
 \put(80,90){\circle{5}}
 \put(100,90){\circle*{5}}
 \put(40,70){\circle{5}}
 \put(0,50){\circle*{5}}
 \put(20,50){\circle{5}}
 \put(60,50){\circle{5}}
 \put(20,30){\circle{5}}
 \put(60,30){\circle{5}}
 \put(0,10){\circle*{5}}
 \put(20,10){\circle*{5}}
 \put(60,10){\circle*{5}}
 \qbezier(20,107)(20,100)(20,93)
 \qbezier(3,90)(10,90)(17,90)
 \qbezier(23,90)(30,90)(37,90)
 \qbezier(43,90)(50,90)(57,90)
 \qbezier(63,90)(70,90)(77,90)
 \qbezier(83,90)(90,90)(97,90)
 \qbezier(40,87)(40,80)(40,73)
 \qbezier(98,88)(80,70)(62,52)
 \qbezier(38,68)(30,60)(22,52)
 \qbezier(3,50)(10,50)(17,50)
 \qbezier(20,47)(20,40)(20,33)
 \qbezier(60,47)(60,40)(60,33)
 \qbezier(18,28)(10,20)(2,12)
 \qbezier(20,27)(20,20)(20,13)
 \qbezier(60,27)(60,20)(60,13)
 \put(35,-8){(d)}
 \end{picture}
 \hskip 0.6cm\begin{picture}(110,120)
 \put(20,110){\circle*{5}}
 \put(0,90){\circle*{5}}
 \put(20,90){\circle{5}}
 \put(35,90){\circle*{5}}\put(45,90){\circle*{5}}\put(40,85){\circle*{5}}
 \put(60,90){\circle{5}}
 \put(80,90){\circle{5}}
 \put(100,90){\circle*{5}}
 \put(40,70){\circle{5}}
 \put(0,50){\circle*{5}}
 \put(20,50){\circle{5}}
 \put(60,50){\circle{5}}
 \put(20,30){\circle{5}}
 \put(60,30){\circle*{5}}
 \put(0,10){\circle*{5}}
 \put(20,10){\circle*{5}}
 \qbezier(20,107)(20,100)(20,93)
 \qbezier(3,90)(10,90)(17,90)
 \qbezier(23,90)(30,90)(33,90)
 \qbezier(47,90)(50,90)(57,90)
 \qbezier(63,90)(70,90)(77,90)
 \qbezier(83,90)(90,90)(97,90)
 \qbezier(40,83)(40,80)(40,73)
 \qbezier(38,68)(30,60)(22,52)
 \qbezier(42,68)(50,60)(58,52)
 \qbezier(3,50)(10,50)(17,50)
 \qbezier(20,47)(20,40)(20,33)
 \qbezier(60,47)(60,40)(60,33)
 \qbezier(18,28)(10,20)(2,12)
 \qbezier(20,27)(20,20)(20,13)
 \put(15,76){$T_1$}\put(65,76){$T_2$}\put(35,50){$T_3$}
 \put(35,-8){(e)}
 \end{picture}
 \hskip 0.6cm\begin{picture}(110,60)
 \put(20,90){\circle{5}}
 \put(60,90){\circle{5}}
 \put(80,90){\circle{5}}
 \put(40,70){\circle{5}}
 \put(20,50){\circle{5}}\put(20,38){\circle{5}}
 \put(60,50){\circle{5}}
 \put(20,22){\circle{5}}
 \put(20,10){\circle{5}}
 \qbezier(63,90)(70,90)(77,90)
 \qbezier(38,68)(30,60)(22,52)
 \qbezier(42,68)(50,60)(58,52)
 \qbezier(20,35)(20,30)(20,25)
 \put(15,76){$S_1$}\put(65,76){$S_2$}\put(36,50){$S_3$}\put(25,27){$S_4$}\put(25,7){$S_5$}
 \put(35,-8){(f)}
 \end{picture}
\vskip 0.5cm\caption{ An example for the decomposition of $C^*\setminus C$. Solid nodes are terminal nodes which correspond to components of $G[C]$. In $(a)$, double circle nodes are pruned, yielding tree $T'$ in $(b)$. Figures in $(c)$ and $(d)$ are used to show how to obtain a tree satisfying assumption \eqref{eq14-11-23-2}. In $(c)$, the dashed edge $uv$ is in $G$ but not in $T'$. Adding edge $uv$ and removing edge $ux$ results in the tree in $(d)$. Full components of $T'$ are shown in $(e)$. Figure $(f)$ shows the decomposed stars.}\label{fig11-1-2}
\end{center}
\end{figure*}

\subsection{The Performance Ratio}

\begin{theorem}\label{lem10-10-1}
The connector set $D_2$ of Algorithm \ref{alg10-6-1} has cost $c(D_2)\leq 2H(\delta-1)opt$.
\end{theorem}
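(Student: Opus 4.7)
My plan is to adapt the standard greedy-cover analysis to the non-submodular potential $p'$, splitting the argument into two ingredients: a per-iteration efficiency bound obtained from the structural decomposition of Lemma~\ref{lem10-6-3}, and a Chv\'atal-style accumulation that yields the $H(\delta-1)$ factor rather than the naive $H(p(D_1)-1)$.

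For the per-iteration bound, I fix an iteration $j$ of the second while-loop and let $C_j=D_1\cup D_2^{(j-1)}$. Since $D_1$ is a dominating set by Lemma~\ref{le10-8-1}, so is $C_j$, and I will apply Lemma~\ref{lem10-6-3} with $C=C_j$ and $C^*$ an optimal $(1,m)$-CDS to decompose $C^*\setminus C_j = Y_0\cup Y_1\cup\cdots\cup Y_h$. Property (iv) immediately gives $\sum_{i\geq 1}c(Y_i)\leq 2c(C^*\setminus C_j)\leq 2\cdot opt$. The key sub-claim to establish is
\[
\sum_{i\geq 1}p'_{C_j}(Y_i)\geq p(C_j)-1,
\]
which I will deduce from properties (i)--(iii) and the connectivity of $G[C_j\cup\bigcup_{i\geq 1}Y_i]$. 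Averaging these two bounds yields $\min_i c(Y_i)/p'_{C_j}(Y_i)\leq 2\cdot opt/(p(C_j)-1)$, and since the algorithm picks the most efficient star in $G$, the same bound applies to $c(S^{(j)})/p'_{C_j}(S^{(j)})$.

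For the accumulation step, I first note that $p'_C(S_u)\leq\delta-1$ for every set $C$ and every star $S_u$: writing $S_u=\{u,u_1,\ldots,u_s\}$, we have $|NC_C(u)|\leq|N_G(u)\cap C|$ and $s\leq|N_G(u)\setminus C|$, so $|NC_C(u)|+s\leq\deg_G(u)\leq\delta$, and hence $p'_C(S_u)\leq|NC_C(u)|-1+s\leq\delta-1$. To aggregate costs I distribute them per merge: when greedy selects $S^{(j)}$, the unit price $c(S^{(j)})/p'_{C_j}(S^{(j)})$ is charged to each of the $p'_{C_j}(S^{(j)})$ component-merges that $S^{(j)}$ performs. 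For each decomposition star $Y_i$ of $C^*\setminus D_1$, I will track the nonincreasing sequence $p'_{C_j}(Y_i)$ and sum the per-merge prices attributed to merges that $Y_i$ would have handled; a Chv\'atal-style telescoping then caps the total charge to $Y_i$ by $c(Y_i)\,H(p'_{D_1}(Y_i))\leq c(Y_i)\,H(\delta-1)$. Summing over $i$ yields $c(D_2)\leq H(\delta-1)\sum_{i\geq 1}c(Y_i)\leq 2H(\delta-1)\cdot opt$.

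The hardest step will be the sub-claim $\sum_{i\geq 1}p'_{C_j}(Y_i)\geq p(C_j)-1$. The $\min\{1,\cdot\}$ truncation in the definition of $p'$ and the possible overlap between components of $G[C_j]$ touched by different $Y_i$ make the inequality nontrivial. My intended route is first to establish $p'_{C_j}(Y_i)\geq|NC_{C_j}(Y_i)|-1$ for each $i$, mimicking the reordering-and-pruning ideas from the proof of Lemma~\ref{le10-30-1}, and then to deduce $\sum_i(|NC_{C_j}(Y_i)|-1)\geq p(C_j)-1$ from the fact that the bipartite incidence structure between the terminals (components of $G[C_j]$) and the stars $Y_i$ inside the contracted Steiner tree $T'$ is connected, which forces the terminal-star incidences to form a spanning connected subgraph.
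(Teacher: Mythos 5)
Your overall architecture is the paper's: decompose $OPT\setminus D_1$ into stars via Lemma~\ref{lem10-6-3}, compare the greedy star pointwise against each decomposition star, telescope a harmonic sum per star capped by $p'_{D_1}(Y_j)\leq\delta-1$ (your degree count for this bound is correct and matches the paper's observation $a_{1,j}\leq\delta-1$), and pick up the factor $2$ from property (iv). You also correctly identify the crux, namely $\sum_j p'_{C}(Y_j)\geq p(C)-1$, which the paper itself only argues pictorially. However, your proposed route to that crux would fail: the intermediate inequality $p'_{C}(Y_i)\geq |NC_{C}(Y_i)|-1$ is false in general, because the truncation $\min\{1,\cdot\}$ in \eqref{eq14-11-14-2} makes the inequality go the other way, $p'_C(Y_i)\leq |NC_C(Y_i)|-1$. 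Concretely, take $Y_i=\{v,u\}$ with center $v$ adjacent to one component $A$ of $G[C]$ and leaf $u$ adjacent to three further components $B_1,B_2,B_3$; then $|NC_C(Y_i)|-1=3$ while $p'_C(Y_i)=(1-1)+\min\{1,3\}=1$. The key inequality survives only because the decomposition also contains the star centered at $u$ itself, which contributes $|NC_C(u)|-1=2$ and recovers the undercount; this double role of nodes is precisely why property (iv) permits membership in two stars and why the factor $2$ appears. Any correct proof of the crux must be organized around this compensation (the paper adds the stars in an order keeping the union connected and compares the telescoped component count against the sum of the $p'$ values), not around a per-star lower bound of $|NC_C(Y_i)|-1$.

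A second, structural problem is that your two ingredients use different decompositions and do not splice together. The per-iteration averaging bound is taken over a decomposition of $C^*\setminus C_j$ recomputed at every iteration; summing that bound over iterations telescopes to $2H(p(D_1)-1)\cdot opt$, exactly the naive bound you set out to avoid. The $H(\delta-1)$ accumulation instead needs the comparison $c(S^{(j)})/p'_{C_j}(S^{(j)})\leq c(Y_i)/p'_{C_j}(Y_i)$ against each \emph{fixed} star $Y_i$ of the decomposition of $C^*\setminus D_1$ (this follows directly from the greedy rule with no averaging), together with the key inequality $\sum_i p'_{C_j}(Y_i)\geq p(C_j)-1$ for those same fixed stars at every iterate $C_j$, and an argument (the paper uses monotonicity of the prices $w_j$ and an Abel rearrangement) showing that the total greedy cost $\sum_j r_jw_j$ is dominated by the sum of the per-star charges $\sum_i\sum_j(a_{j,i}-a_{j+1,i})w_j$. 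Your ``sum the per-merge prices attributed to merges that $Y_i$ would have handled'' hides exactly this step; as written, the averaging ingredient should be dropped and replaced by the pointwise comparison, and the charging step needs to be made explicit.
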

\begin{proof}
Let $S_{1},S_{2},\ldots,S_{g}$ be the sets chosen by Algorithm \ref{alg10-6-1} in the order of their selection into set $D_2$. Let $D_G^{(0)}=D_1$. For $1\leq i\leq g$, let
$$
D_G^{(i)}=D_G^{(0)}\cup S_{1}\cup \ldots\cup S_{i}.
$$
For $i=1,\ldots,g$, denote
$$
r_{i}=-\bigtriangleup_{S_{i}} p(D_G^{(i-1)})\ \mbox{and}\ w_{i}=\frac{c(S_{i})}{-\bigtriangleup_{S_{i}}p(D_G^{(i-1)})}.
$$

Suppose $\{Y_0,Y_{1},\ldots,Y_{h}\}$ is the decomposition of $OPT\setminus D_1$ as in Lemma \ref{lem10-6-3}, where $Y_i$ is a star centered at node $v_i$. For $i=1,\ldots,g+1$ and $j=1,\ldots,h$, denote
\begin{equation}\label{eq14-11-26-1}
a_{i,j}=p'_{D_G^{(i-1)}}(Y_j).
\end{equation}
For $1\leq j\leq h$, define
$$
f(Y_{j})=\sum_{i=1}^{g}(a_{i,j}-a_{i+1,j})w_i,
$$
and let
\begin{equation}\label{eq11-25-1}
f(OPT)=\sum_{j=1}^{h}f(Y_{j}).
\end{equation}

{\em Claim 1.} For any $j=1,\ldots,h$, $f(Y_j)\leq H(a_{1,j})c(Y_j)$.

Since $S_i$ is chosen according to Lemma \ref{le10-30-1}, the special structure of $S_i$ implies that
$$
p_{D_G^{(i-1)}}'(S_i)=-\bigtriangleup_{S_i}p(D_G^{(i-1)}).
$$
Hence,
$$
w_{i}=\frac{c(S_{i})}{p_{D_G^{(i-1)}}'(S_i)}.
$$
Then, by the greedy choice of $S_i$, we have
\begin{equation}\label{eq14-11-24-3}
w_i=\frac{c(S_i)}{p_{D_G^{(i-1)}}'(S_i)}\leq \frac{c(Y_{j})}{p_{D_G^{(i-1)}}'(Y_{j})}=\frac{c(Y_j)}{a_{i,j}}.
\end{equation}
By the definition of $p'$, it can be seen that $a_{i,j}$ is a decreasing function on variable $i$. Hence $a_{i,j}-a_{i+1,j}\geq 0$. Combining this with \eqref{eq14-11-24-3},
\begin{align*}
f(Y_{j})\
& \leq \ \sum_{i=1}^g (a_{i,j}-a_{i+1,j})\frac{c(Y_{j})}{a_{i,j}}\\
&\leq\  c(Y_j) \sum_{i=1}^g \big(H(a_{i,j})-H(a_{i+1,j})\big)\\
&=\ c(Y_{j}) \big(H(a_{1,j})-H(a_{g+1,j})\big),
\end{align*}
where the second inequality uses the fact that for any integers $a\geq b$,
$$
\frac{a-b}{a}=\sum_{l=b+1}^a\frac{1}{a}\leq\sum_{l=b+1}^a\frac{1}{l}=H(a)-H(b).
$$
Observe that $a_{g+1,j}=0$ since $D_G^{(g)}$ is connected, the claim follows.

{\em Claim 2.} $c(D_2)\leq f(OPT)$.

Notice that $c(D_2)$ and $f(Y_j)$ can be rewritten as
\begin{equation}\label{eq-11-17-1}
c(D_2)=\sum_{i=1}^gr_iw_i=\sum_{i=1}^g\left(\sum_{l=i}^gr_l-\sum_{l=i+1}^gr_l\right)w_i=
\left(\sum_{l=1}^gr_l\right)w_1+\sum_{i=2}^g\left(\sum_{l=i}^gr_l\right)(w_i-w_{i-1})
\end{equation}
and
\begin{equation}\label{eq-11-17-2}
f(Y_j)=a_{1,j}w_1+\sum_{i=2}^ga_{i,j}(w_i-w_{i-1}).
\end{equation}
By the monotonicity of $p'$, we have $p'_{D_G^{(i-1)}}(S_i)\leq p'_{D_G^{(i-2)}}(S_i)$. Combining this with the greedy choice of $S_{i-1}$, we have
$$
w_i=\frac{c(S_i)}{p'_{D_G^{(i-1)}}(S_i)}\geq \frac{c(S_i)}{p'_{D_G^{(i-2)}}(S_i)}\geq \frac{c(S_{i-1})}{p'_{D_G^{(i-2)}}(S_{i-1})}=w_{i-1}.
$$
In other words, $w_i-w_{i-1}\geq 0$ for $i=1,\ldots,g$. Then, by \eqref{eq11-25-1}, \eqref{eq-11-17-1}, and \eqref{eq-11-17-2}, it can be seen that to prove Claim 2, it suffices to prove that for $i=1,\ldots,g$,
\begin{equation}\label{eq14-11-24-6}
\sum_{j=1}^ha_{i,j}\geq \sum_{l=i}^gr_l.
\end{equation}
The right hand side is
$$
\sum_{l=i}^gr_l=\sum_{l=i}^g-\bigtriangleup_{S_l}p(D_G^{(l-1)})=\sum_{l=i}^g\big(p(D_G^{(l-1)})-p(D_G^{(l)})\big)=p(D_G^{(i-1)})-p(D_G^{(g)})=p(D_G^{(i-1)})-1.
$$
So, proving \eqref{eq14-11-24-6} is equivalent to proving
\begin{equation}\label{eq14-11-24-7}
\sum_{j=1}^hp'_{D^{(i-1)}_G}(Y_j)+1\geq p(D_G^{(i-1)}).
\end{equation}
This inequality can be illustrated by Fig.\ref{figzz}. In this figure, $OPT$ is decomposed into four stars. A comprehension for the value $p'_{D}(Y_1)=3$ is that in Fig.\ref{figzz}(c), the double circled components are merged into the triangled component. Call the new component as $\widetilde C$. Then, the comprehension of $p'_D(Y_2)=3$ is that in Fig.\ref{figzz}(d), double circled components are merged into the triangled component. Notice that this triangled component is contained in $\widetilde C$, and thus we can regard it as $\widetilde C$ in our comprehension. Continue this procedure sequentially in such a way that $Y_1\cup\ldots\cup Y_l$ is connected for $l=1,\ldots,4$. Finally, all components of $G[D]$ are merged into one component, the reduction on the number of components is $p(D)-1$. Notice that the inequality might be strict because some components are counted more than once in the summation part. For example, the component labeled by $u_4$ is repetitively counted.

\vskip 0.2cm 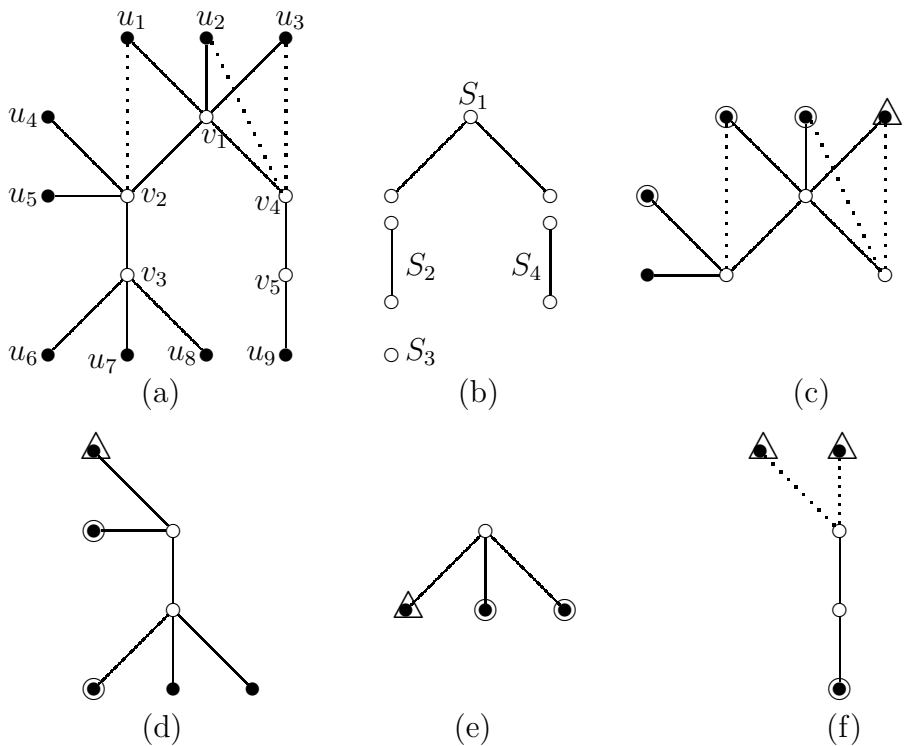
\begin{figure*}[!htbp]
\begin{center}
\begin{picture}(100,140)
\put(0.0,10.0){\circle*{5}}
\put(30.0,10.0){\circle*{5}}
\put(60.0,10.0){\circle*{5}}
\put(90.0,10.0){\circle*{5}}
\put(30.0,40.0){\circle{5}}
\put(90.0,40.0){\circle{5}}
\put(0.0,70.0){\circle*{5}}
\put(30.0,70.0){\circle{5}}
\put(90.0,70.0){\circle{5}}
\put(0.0,100.0){\circle*{5}}
\put(60.0,100.0){\circle{5}}
\put(30.0,130.0){\circle*{5}}
\put(60.0,130.0){\circle*{5}}
\put(90.0,130.0){\circle*{5}}
\qbezier(2,12)(15.0,25.0)(28,38)
\qbezier(30.0,12)(30.0,25.0)(30.0,37)
\qbezier(58,12)(45.0,25.0)(32,38)
\qbezier(90.0,13)(90.0,25.0)(90.0,37)
\qbezier(30.0,43)(30.0,55.0)(30.0,67)
\qbezier(90.0,43)(90.0,55.0)(90.0,67)
\qbezier(3,70.0)(15.0,70.0)(27,70.0)
\qbezier(28,72)(15.0,85.0)(2,98)
\qbezier(32,72)(45.0,85.0)(58,98)
\qbezier(88,72)(75.0,85.0)(62,98)
\qbezier(58,102)(45.0,115.0)(32,128)
\qbezier(60.0,103)(60.0,115.0)(60.0,127)
\qbezier(62,102)(75.0,115.0)(88,128)
{\linethickness{0.3mm}\qbezier[13](30.0,73)(30.0,100.0)(30.0,127)
\qbezier[13](88,72)(75.0,100.0)(62,128)
\qbezier[13](90.0,73)(90.0,100.0)(90.0,127) }
\put(58,90){$v_1$}\put(35,68){$v_2$}\put(35,38){$v_3$}\put(78,65){$v_4$}\put(78,35){$v_5$}
\put(26,135){$u_1$}\put(56,135){$u_2$}\put(86,135){$u_3$}\put(-15,98){$u_4$}\put(-15,68){$u_5$}
\put(-15,8){$u_6$}\put(15,5){$u_7$}\put(45,8){$u_8$}\put(75,8){$u_9$}
\put(35,-8){(a)}
\end{picture}\hskip 0cm\begin{picture}(100,140)
\put(30.0,30.0){\circle{5}}\put(30.0,10.0){\circle{5}}
\put(90.0,30.0){\circle{5}}
\put(30.0,70.0){\circle{5}}\put(30.0,60.0){\circle{5}}
\put(90.0,70.0){\circle{5}}\put(90.0,60.0){\circle{5}}
\put(60.0,100.0){\circle{5}}
\qbezier(30.0,33)(30.0,45.0)(30.0,57)
\qbezier(90.0,33)(90.0,45.0)(90.0,57)
\qbezier(32,72)(45.0,85.0)(58,98)
\qbezier(88,72)(75.0,85.0)(62,98)
\put(55,105){$S_1$}\put(35,40){$S_2$}\put(35,8){$S_3$}\put(75,40){$S_4$}
\put(55,-8){(b)}
\end{picture}
\hskip 0.8cm\begin{picture}(100,140)
\put(0.0,40.0){\circle*{5}}
\put(30.0,40.0){\circle{5}}
\put(90.0,40){\circle{5}}
\put(0.0,70){\circle*{5}}\put(0.0,70){\circle{8}}
\put(60.0,70){\circle{5}}
\put(30.0,100){\circle*{5}}\put(30.0,100){\circle{8}}
\put(60.0,100){\circle*{5}}\put(60.0,100){\circle{8}}
\put(90.0,100){\circle*{5}}{\large\put(84.5,97){$\bigtriangleup$}}
\qbezier(3,40)(15.0,40)(27,40)
\qbezier(28,42)(15.0,55.0)(2,68)
\qbezier(32,42)(45.0,55.0)(58,68)
\qbezier(88,42)(75.0,55)(62,68)
\qbezier(58,72)(45.0,85)(32,98)
\qbezier(60.0,73)(60.0,85)(60.0,97)
\qbezier(62,72)(75.0,85.0)(88,98)
{\linethickness{0.3mm}\qbezier[13](30.0,43)(30.0,70)(30.0,97)
\qbezier[13](88,42)(75.0,70)(62,98)
\qbezier[13](90.0,43)(90.0,70)(90.0,97) }
\put(55,-8){(c)}
\end{picture}

\vskip 0.9cm\begin{picture}(100,100)
\put(30.0,10.0){\circle*{5}}
\put(60.0,10.0){\circle*{5}}
\put(0.0,10.0){\circle*{5}}\put(0.0,10.0){\circle{8}}
\put(30.0,40.0){\circle{5}}
\put(0.0,70.0){\circle*{5}}\put(0.0,70.0){\circle{8}}
\put(30.0,70.0){\circle{5}}
\put(0.0,100){\circle*{5}}{\large\put(-5.5,97){$\bigtriangleup$}}
\qbezier(2,12)(15.0,25.0)(28,38)
\qbezier(30.0,12)(30.0,25.0)(30.0,37)
\qbezier(58,12)(45.0,25.0)(32,38)
\qbezier(30.0,43)(30.0,55.0)(30.0,67)
\qbezier(3,70.0)(15.0,70.0)(27,70.0)
\qbezier(28,72)(15.0,85.0)(2,98)
\put(18,-8){(d)}
\end{picture}
\hskip 0.5cm\begin{picture}(70,100)
\put(30.0,40){\circle*{5}}\put(30,40){\circle{8}}
\put(60.0,40){\circle*{5}}\put(60,40){\circle{8}}
\put(0.0,40){\circle*{5}}{\large\put(-5.5,37){$\bigtriangleup$}}
\put(30.0,70){\circle{5}}
\qbezier(2,42)(15.0,55)(28,68)
\qbezier(30.0,42)(30.0,55)(30.0,67)
\qbezier(58,42)(45.0,55)(32,68)
\put(18,-8){(e)}
\end{picture}
\hskip 0cm\begin{picture}(100,100)
\put(90.0,10.0){\circle*{5}}\put(90.0,10.0){\circle{8}}
\put(90.0,40.0){\circle{5}}
\put(90.0,70.0){\circle{5}}
\put(60.0,100){\circle*{5}}{\large\put(54.8,97){$\bigtriangleup$}}
\put(90.0,100){\circle*{5}}{\large\put(84.6,97){$\bigtriangleup$}}
\qbezier(90.0,13)(90.0,25.0)(90.0,37)
\qbezier(90.0,43)(90.0,55.0)(90.0,67)
{\linethickness{0.3mm}\qbezier[9](88,72)(75,85)(62,98)\qbezier[7](90,73)(90,85)(90,97)}
\put(85,-8){(f)}
\end{picture}\vskip 0.5cm\caption{An illustration for inequality \eqref{eq14-11-24-7}. Solid circles indicate components.}\label{figzz}
\end{center}
\end{figure*}

Claim 2 is proved.

By Lemma \ref{lem10-6-3} $(\romannumeral4)$, we have
\begin{equation}\label{eq14-11-27-8}
\sum_{i=1}^{h}c(Y_{i})\leq 2opt.
\end{equation}
Then by Claim 1, Claim 2, and the observation that $a_{1,j}\leq \delta-1$, we have
$$
c(D_2)\leq \sum_{j=1}^{h}c(Y_j)H(\delta-1)\leq 2H(\delta-1)opt.
$$
The theorem is proved.
\end{proof}

Combining Theorem \ref{thm14-11-25-1} and Theorem \ref{lem10-10-1}, we have the following result.

\begin{theorem}\label{th10-18-2}
Algorithm \ref{alg10-6-1} is a polynomial-time $\big(H(\delta+m)+2H(\delta-1)\big)$-approximation for the $m$-MWCDS problem.
\end{theorem}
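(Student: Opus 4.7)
The plan is to assemble the final performance ratio by combining the two already-established bounds on the two phases of Algorithm \ref{alg10-6-1}, and then to check that nothing in the algorithm takes superpolynomial time. Since the output is $D_G = D_1 \cup D_2$, the approximation claim reduces to summing the cost bounds on the two pieces against a common benchmark, namely the optimum $opt$ of $(1,m)$-MWCDS.

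First, I would invoke Theorem \ref{thm14-11-25-1} to get $c(D_1) \leq H(\delta+m)\cdot opt$. Note that this theorem compares $c(D_1)$ against the optimum of the pure $m$-MWDS problem, but dropping the connectivity constraint can only decrease the optimum, so the same bound holds against the $(1,m)$-MWCDS optimum $opt$. Next, Theorem \ref{lem10-10-1} directly gives $c(D_2) \leq 2H(\delta-1)\cdot opt$. Adding these two inequalities yields
\[
c(D_G) = c(D_1) + c(D_2) \leq \bigl(H(\delta+m) + 2H(\delta-1)\bigr)\cdot opt,
\]
which is the claimed ratio. Correctness of the output as a $(1,m)$-CDS was already established.

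For polynomial running time, the first phase is a standard greedy submodular cover, whose dominant cost per iteration is evaluating $\bigtriangleup_u q(D_1)$ over all $u \in V\setminus D_1$, clearly polynomial. The second phase is the delicate part: naively, enumerating stars $S_u$ of arbitrary cardinality would be exponential. The argument I would point to is Lemma \ref{le10-30-1} together with the procedure described right after it in Subsection \ref{subsec4.2}: a most efficient star can be computed by guessing a center $u$ ($O(n)$ choices), restricting attention to neighbors $v$ of $u$ with $|NC_C(v)| = 1$, sorting them by cost, greedily scanning to discard those whose component neighbor is already covered, and then evaluating only the $t+1$ nested prefixes as candidate stars. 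Since each iteration adds at least one new node to $D_2$, the number of outer iterations is at most $n$, giving overall polynomial time.

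The only non-mechanical point is justifying the inequality $opt' \leq opt$ used when invoking Theorem \ref{thm14-11-25-1}, which is immediate since every $(1,m)$-CDS is in particular an $m$-DS. I do not anticipate any real obstacle in this final step; the substantive work has already been done in Theorems \ref{thm14-11-25-1} and \ref{lem10-10-1}, and this theorem is just the capstone combining them.
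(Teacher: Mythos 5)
Your proposal is correct and follows exactly the paper's own route: the paper proves this theorem in one line by combining Theorem \ref{thm14-11-25-1} and Theorem \ref{lem10-10-1}, with polynomial-time computability of the greedy star already settled in Subsection \ref{subsec4.2}. The only cosmetic difference is that you re-justify $opt'\leq opt$ here, whereas the paper already absorbed that step into the statement and proof of Theorem \ref{thm14-11-25-1}.
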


\section{Implementation on Unit Disk Graphs}\label{secUDG}

Notice that the maximum degree $\delta$ in the performance ratio of the second part of the algorithm comes from $a_{1,j}\leq \delta-1$. So, to improve the approximation factor when the graph under consideration is a unit disk graph, we improve the upper bound for $a_{1,j}$ first. We use notation $T_1,\ldots,T_l$ in the proof of Lemma \ref{lem10-6-3}. Recall that each $T_i$ has $V(T_i)\subseteq C^*\setminus C$ and  is decomposed into some stars, and the final decomposition of $C^*\setminus C$ is the union of these stars and a set of pruned nodes. We shall use $\|\cdot\|$ to denote Euclidean length.

\begin{lemma}\label{lem14-11-27-7}
In a unit disk graph, there exists a set of subtrees $T_1,\ldots,T_l$ in the proof of Lemma \ref{lem10-6-3} such that any node $v\in V(T_i)$ has $|NC_C(v)|+deg_{T_i}(v)\leq 5$.
\end{lemma}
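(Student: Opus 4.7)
The plan is to combine the standard unit disk graph packing bound $\alpha(G[N_G(v)])\le 5$ with an exchange argument on decompositions. The UDG ingredient is the usual angle argument: if $u_i,u_j\in N_G(v)$ are nonadjacent then the triangle $vu_iu_j$ has $\|u_iu_j\|>1$ and $\|vu_i\|,\|vu_j\|\le 1$, forcing an angle at $v$ strictly greater than $\pi/3$, so at most five pairwise nonadjacent neighbors of $v$ fit around $v$.

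For a non-terminal $v\in V(T_i)$ I would want to exhibit $|NC_C(v)|+\deg_{T_i}(v)$ pairwise nonadjacent neighbors of $v$, from which the conclusion is immediate. The natural candidates are the $T_i$-neighbors $w_1,\dots,w_d$ of $v$ together with one representative $x_j\in C_j\cap N_G(v)$ for each $C_j\in NC_C(v)$. The $x_j$'s and $w_k$'s are distinct since $x_j\in C$ and $w_k\in C^*\setminus C$, and the $x_j$'s are already pairwise nonadjacent because they lie in distinct components of $G[C]$. It remains to ensure (a) that the $w_k$'s are pairwise nonadjacent in $G$ and (b) that no $w_k$ is adjacent to any $x_j$.

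I would realise (a) and (b) by picking, among all decompositions allowed by Lemma~\ref{lem10-6-3}, one minimising some integer potential $\Phi$ (e.g.\ the lexicographically sorted sequence of non-terminal degrees in $T'$), and then choosing representatives optimally for this fixed tree. If (a) or (b) fails at some $v$, the UDG bound produces a bad adjacency among $\{x_j\}\cup\{w_k\}$. In the case $w_jw_{j'}\in E(G)$, the swap ``delete $vw_j$, insert $w_jw_{j'}$'' yields another spanning tree of the contracted graph $H$ in which $\deg_{T_i}(v)$ has dropped by one (or $v$ becomes a non-terminal leaf and is pruned altogether), strictly decreasing $\Phi$ once the decomposition is re-normalised. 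In the case $x_jw_k\in E(G)$, I would first try to re-pick $x_j$ within $C_j\cap N_G(v)$ to avoid $w_k$; if no valid re-pick exists then every $x\in C_j\cap N_G(v)$ is adjacent to $w_k$, hence $w_k$ is also adjacent to $C_j$ and the edge $w_kt_{C_j}$ lies in $H$. Removing $vw_k$ splits $T'$ into two pieces, exactly one of which contains $t_{C_j}$; adding whichever of $vt_{C_j}$ or $w_kt_{C_j}$ reconnects them gives a spanning tree in which $\deg_{T_i}(v)$ has again dropped, contradicting minimality.

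The main obstacle is the second case: $|NC_C(v)|$ is intrinsic to $G$ and cannot be altered by tree surgery, so every contradiction must come from shrinking $\deg_{T_i}(v)$, and one must carefully verify that the rerouting does not break the full-component structure of $T'$ or the standing assumption \eqref{eq14-11-23-2} $-$ both are restored by the same leaf-pruning and edge-swap moves already used in the proof of Lemma~\ref{lem10-6-3}, and neither step increases $\Phi$. Since $\Phi$ is a non-negative integer, the exchange terminates after finitely many steps; at termination $\{x_1,\dots,x_p,w_1,\dots,w_d\}$ is an independent set in $N_G(v)$ for every non-terminal $v$, and the UDG bound gives $|NC_C(v)|+\deg_{T_i}(v)\le 5$ as required.
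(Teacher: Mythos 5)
Your packing bound $\alpha(G[N_G(v)])\le 5$ is correct, but the plan hinges on making the whole set $\{x_1,\dots,x_p,w_1,\dots,w_d\}$ independent, and that target is not attainable by tree surgery in general. Consider three Steiner nodes $a,b,c$ forming a triangle in $G$, each adjacent (in $G$) to a different component of $G[C]$ and to nothing else outside the triangle: any valid tree must use at least two of the edges $ab,bc,ca$, and any two of them share an endpoint, so some node necessarily has two $T_i$-neighbors that are adjacent in $G$. At such a node your independent set has size at most $|NC_C(v)|+deg_{T_i}(v)-1$, and the packing bound only yields $|NC_C(v)|+deg_{T_i}(v)\le 6$, not $5$. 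The deeper point is that adjacency of two neighbors of $v$ does not force the angle they subtend at $v$ below $\pi/3$, so independence is strictly stronger than what is needed, and strictly stronger than what the tree can always provide.

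The exchange argument itself also lacks a decreasing potential in case (a). The swap ``delete $vw_j$, insert $w_jw_{j'}$'' keeps the number of non-terminal--non-terminal edges fixed: it lowers $deg_{T_i}(v)$ by one but raises $deg_{T_i}(w_{j'})$ by one, so the total degree sum is unchanged and the lexicographically sorted degree sequence can \emph{increase} (e.g.\ degrees $(3,3)$ become $(4,2)$). If $v,w_j,w_{j'}$ form a triangle in $G$ the local search simply cycles among the three spanning paths. The paper sidesteps both problems by taking $T$ to be a valid tree of minimum \emph{Euclidean} length: length-minimality alone forces every pair of consecutive relevant neighbors of $v$ to subtend an angle of at least $\pi/3$, whether or not they are adjacent, which immediately gives $|NC_C(v)|+deg_{T_i}(v)\le 6$; the tie case is then completely rigid (the six neighbors form a regular hexagon centered at $v$), and this rigidity is what guarantees that the receiving endpoints of the final swap satisfy $|NC_C(x_j)|+deg_{T_i}(x_j)\le 4$, so that the count drops to $5$ without creating a new violator. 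Your case (b) rerouting through $t_{C_j}$ is fine (it genuinely removes a non-terminal--non-terminal edge), but without a substitute for the length-minimization step the argument for the bound $5$ does not go through.
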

\begin{proof}
Construct a spanning tree of $G[C\cup C^*]$ as follows. First, replace each component of $G[C]$ by a spanning tree of that component, which is called a {\em tree component}. Then, find a minimum length tree $T$ which spans all nodes of $C^*\setminus C$ and all tree components, where ``minimum length'' is with respect to Euclidean distance. For convenience of statement, we shall call a tree which spans all nodes of $C^*\setminus C$ and all tree components as a {\em valid tree}. Each component of $G[C]$ is called a component node of such a tree and is dealt with as a whole in the following. Similarly to the proof of Lemma \ref{lem10-6-3}, by pruning leafs in $C^*\setminus C$ and by splitting off at component nodes, we obtain a set of full components $T_1',\ldots,T_l'$. Let $T_i$ be the subtree of $T_i'$ with component nodes removed. We choose tree $T$ such that
\begin{equation}\label{eq14-11-27-3}
\sum_{i=1}^l\sum_{u\in V(T_i)}deg_{T_i}(u)\ \mbox{is as small as possible.}
\end{equation}

Suppose there is a node $v\in V(T_i)$ with $|NC_C(v)|+deg_{T_i}(v)=t>6$, assume $NC_C(v)\cup N_{T_i}(v)=\{x_1,\ldots,x_t\}$, where $x_j$ is a node in $C^*\setminus C$ or a node in a component neighbor of $v$ (if $v$ is adjacent with more than one node of a component neighbor, only one node of that component neighbor is chosen to appear in $\{x_1,\ldots,x_t\}$), and $x_1,\ldots,x_t$ are in a clockwise order around node $v$. Since $t>6$, there is an index $j$ with $\angle x_jvx_{j+1}<\pi/3$ ($x_{t+1}$ is viewed as $x_1$). Then $\|x_jx_{j+1}\|<\max\{\|vx_j\|,\|vx_{j+1}\|\}$, say $\|x_jx_{j+1}\|<\|vx_j\|$. Replacing edge $vx_j$ by $x_jx_{j+1}$, we obtain another valid tree whose Euclidean length is shorter than $T$, a contradiction. So,
\begin{equation}\label{eq14-11-27-6}
\mbox{every node $v\in V(T_i)$ has $|NC_C(v)|+deg_{T_i}(v)\leq 6$.}
\end{equation}
A node $u$ with $|NC_C(v)|+deg_{T_i}(v)=6$ is called {\em bad}.

Similar argument shows that for any bad node $v\in V(T_i)$, $\angle x_jvx_{j+1}=\pi/3$ for $j=1,\ldots,6$ and $\|vx_1\|=\cdots=\|vx_6\|$. In other words, $x_1,\ldots,x_6$ locate at the corners of a regular hexagon with center $v$. First, suppose node $v$ has a component neighbor, say $x_1$ is in a component neighbor of $v$. If $x_2\in C$, then $x_2$ must be in a same component of $G[C]$ as $x_1$, because $\|x_1x_2\|=\|vx_1\|\leq 1$, contradicting our convention that one component has at most one node appearing in $\{x_1,\ldots,x_6\}$. So, $x_2\in C^*\setminus C$. Then, $\widetilde T=T-\{vx_2\}+\{x_1x_2\}$ is a valid tree with the same length as $T$. Notice that $\sum_{i=1}^l\sum_{u\in V(T_i)}deg_{T_i}(u)$ is decreased by two (edge $x_1x_2$ does not contribute to the degree sum since one of its end is in a component neighbor and thus does not belong to $T_i$), contradicting the choice of $T$ (see \eqref{eq14-11-27-3}). So, for $j=1,\ldots,6$, $x_j\in C^*\setminus C$. Since $\widetilde T=T-\{vx_{j-1},vx_{j+1}\}+\{x_jx_{j-1},x_jx_{j+1}\}$ is also a valid tree whose length is the same as $T$, we have $|NC_C(x_j)|+deg_{\widetilde T_i}(x_j)\leq 6$. By noticing that $deg_{\widetilde T_i}(x_j)=deg_{T_i}(x_j)+2$ (since both $x_{j-1},x_{j+1}\in C^*\setminus C$), we have
\begin{equation}\label{eq14-11-27-5}
|NC_C(x_j)|+deg_{T_i}(x_j)\leq 4.
\end{equation}
This inequality holds for any $j=1,\ldots,6$. Notice that $\widehat T=T-\{vx_1\}+\{x_1x_2\}$ is a valid tree with the same length as $T$ and $\sum_{i=1}^l\sum_{u\in V(\widehat T_i)}deg_{\widehat T_i}(u)=\sum_{i=1}^l\sum_{u\in V(T_i)}deg_{T_i}(u)$. By property \eqref{eq14-11-27-6} and \eqref{eq14-11-27-5}, we have
$$
|NC_C(u)|+deg_{\widehat T_i}(u)=\left\{\begin{array}{ll}|NC_C(u)|+deg_{T_i}(u)-1\leq 5, & u=v,\\|NC_C(u)|+deg_{T_i}(u)+1\leq 5, & u=x_1\ \mbox{or}\ x_2,\\ |NC_C(u)|+deg_{T_i}(u)\leq 6, & u\neq v,x_1,x_2.\end{array}\right.
$$
So, the number of bad nodes in $\widehat T$ is strictly reduced. By recursively executing such an operation, we have a tree satisfying the requirement of this lemma.

Recall that condition \eqref{eq14-11-23-2} plays an important role in the decomposition. This does no pose any difficulty here, because by the modification method in the proof of Lemma \ref{lem10-6-3}, if $u\in C^*\setminus C$ is not adjacent with any component neighbor in $T'$, then we may just add an edge between $u$ and a component neighbor, and remove an edge between $u$ and another node in $C^*\setminus C$. Such an operation does not increase the value of $|NC_C(u)|+deg_{T_i}(u)$.
\end{proof}

\begin{theorem}
When applied to unit disk graphs, the node set $D_2$ produced by Algorithm \ref{alg10-6-1} has cost $c(D_2)\leq 3.67opt$.
\end{theorem}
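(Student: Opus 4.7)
The plan is to tighten the analysis of Theorem~\ref{lem10-10-1} for UDGs by plugging in Lemma~\ref{lem14-11-27-7}. The framework of Claims 1 and 2 in the proof of Theorem~\ref{lem10-10-1} carries over unchanged: we still get
$$c(D_2) \;\leq\; \sum_{j=1}^{h} c(Y_j)\, H(a_{1,j})$$
together with the star-cover bound $\sum_{j=1}^{h} c(Y_j) \leq 2\, opt$ from Lemma~\ref{lem10-6-3}(iv). What must change is the bound on $a_{1,j}$; previously this was the crude $\delta-1$ coming from $|NC_{D_1}(v_j)|\leq\delta-1$, and our goal is to replace it with an absolute constant.

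The first step is to apply Lemma~\ref{lem14-11-27-7} to the decomposition of $OPT\setminus D_1$ from Lemma~\ref{lem10-6-3}, so that the subtrees $T_1,\ldots,T_l$ satisfy $|NC_{D_1}(v)|+\deg_{T_i}(v)\leq 5$ for every $v\in V(T_i)$. Writing $Y_j = S_{v_j}$ with $v_j$ the center and $s$ children in $T_i$, the definition \eqref{eq14-11-14-2} of $p'$ gives immediately
$$a_{1,j} \;=\; p'_{D_1}(S_{v_j}) \;\leq\; |NC_{D_1}(v_j)| + s - 1,$$
since each term $\min\{1,\cdot\}$ in \eqref{eq14-11-14-2} is at most $1$. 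The second step is a case split on whether $v_j$ is the root of its $T_i$. If $v_j$ is not a root, then $\deg_{T_i}(v_j) = s+1$ (the $+1$ counting the parent edge), so $|NC_{D_1}(v_j)|+s \leq 4$ and hence $a_{1,j}\leq 3$. If $v_j$ is a root, $\deg_{T_i}(v_j)=s$ and the same inequality only yields $a_{1,j}\leq 4$, one per $T_i$.

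The third and critical step is to push the bound for the (at most) $l$ root stars down to $3$ as well. The idea is to choose the root $v_i$ of each $T_i$ carefully: if some node $v\in V(T_i)$ already satisfies $|NC_{D_1}(v)|+\deg_{T_i}(v)\leq 4$, root $T_i$ there and the root-star bound becomes $a_{1,j}\leq 3$. Otherwise every $v\in V(T_i)$ attains the extremal value $5$, and the neighborhood of every vertex is rigidly arranged with all neighbors at angular spacing $\geq \pi/3$; in this case I would run a local swap analogous to the one in the proof of Lemma~\ref{lem14-11-27-7}, replacing a tree edge by a strictly shorter Euclidean chord to reduce either the total length of $T$ or the degree sum $\sum_i\sum_{u\in V(T_i)}\deg_{T_i}(u)$, contradicting the extremal choice of $T$. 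I expect this to be the main obstacle: performing the swap so that no new extremal vertex is created elsewhere in the tree, and verifying that the resulting trees still satisfy assumption \eqref{eq14-11-23-2} so that the decomposition of Lemma~\ref{lem10-6-3} remains valid.

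Once $a_{1,j}\leq 3$ holds for every $Y_j$, the conclusion is immediate:
$$c(D_2) \;\leq\; H(3)\sum_{j=1}^{h} c(Y_j) \;\leq\; 2H(3)\cdot opt \;=\; \tfrac{11}{3}\,opt \;<\; 3.67\,opt,$$
which is the desired bound.
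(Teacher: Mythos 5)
Your first two steps reproduce the paper's analysis exactly: combining Lemma \ref{lem14-11-27-7} with the decomposition of Lemma \ref{lem10-6-3} gives $a_{1,j}\leq 3$ for every non-root star and $a_{1,j}\leq 4$ for the star at the root of each $T_i$. The gap is in your third step. You cannot in general push the root-star bound down to $3$, because the value $|NC_{D_1}(v)|+deg_{T_i}(v)=5$ is geometrically realizable at \emph{every} node of a full component and no local swap forbids it. The simplest counterexample is a full component whose non-terminal part is a single Steiner node $v$ adjacent to five distinct components of $G[D_1]$: place the five contact points at unit distance from $v$ with consecutive angles $2\pi/5>\pi/3$, so their pairwise distances exceed $1$ and the length-decreasing or degree-decreasing exchanges used in the proof of Lemma \ref{lem14-11-27-7} are all unavailable (that argument has bite only when the count reaches $6$, forcing all angles to equal $\pi/3$). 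Here $T_i=\{v\}$, the root star is $Y_v=\{v\}$, and $a_{1,Y_v}=|NC_{D_1}(v)|-1=4$ no matter how you root or re-decompose. So the inequality $c(D_2)\leq H(3)\sum_j c(Y_j)$ that your conclusion rests on is false as stated, and $H(4)\cdot 2\,opt=\tfrac{25}{6}opt\approx 4.17\,opt$ is all you get from the crude bound.

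The paper closes this gap by a cost-balancing argument rather than a combinatorial one: it roots each $T_i$ at its \emph{maximum-cost} node $v_i$, accepts $H(4)$ on the root star, and exploits two pieces of slack. First, since the root lies in exactly one star of the decomposition, the bound of Lemma \ref{lem10-6-3}$(iv)$ improves to $\sum_i\sum_{Y_v^{(i)}}c(Y_v^{(i)})+\sum_i c(v_i)\leq 2\,opt$. Second, since $v_i$ has a component neighbor, $deg_{T_i}(v_i)\leq 4$, so $Y_{v_i}^{(i)}$ has at most five nodes and $c(Y_{v_i}^{(i)})\leq 5c(v_i)$; hence
$$
H(4)c(Y_{v_i}^{(i)})=H(3)c(Y_{v_i}^{(i)})+\tfrac{1}{4}c(Y_{v_i}^{(i)})\leq H(3)c(Y_{v_i}^{(i)})+\tfrac{5}{4}c(v_i)<H(3)\bigl(c(Y_{v_i}^{(i)})+c(v_i)\bigr),
$$
and summing gives $c(D_2)\leq 2H(3)\,opt<3.67\,opt$. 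You would need to replace your step three by this (or an equivalent) argument; as written, the proof does not go through.
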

\begin{proof}
Let $T_1,\ldots,T_l$ be the subtrees in Lemma \ref{lem14-11-27-7}. Choose node $v_i=\arg\max\{c(v)\colon v\in V(T_i)\}$ to be the root of $T_i$. Decompose $C^*\setminus C$ as in the proof of Lemma \ref{lem10-6-3}. Let $\{Y_v^{(i)}\}$ be the set of stars coming from the decomposition of $T_i$. To avoid ambiguity, we use $a_{1,Y_v^{(i)}}$ to denote $p'_{D_1}(Y_v^{(i)})$ (which is $a_{1,j}$ in the proof of Theorem \ref{lem10-10-1}). By Lemma \ref{lem14-11-27-7}, if $v=v_i$, then $a_{1,Y_v^{(i)}}\leq |NC_{D_1}(v)|-1+ deg_{T_i}(v)\leq 4$; if $v\neq v_i$, then  $a_{1,Y_v^{(i)}}\leq |NC_{D_1}(v)|-1+deg_{T_i}(v)-1\leq 3$ (this is because the parent of $v$ is not in $Y_v$ if $v\neq v_i$).

Notice that $v_i$ belongs to exactly one star in the decomposition of $C^*\setminus C$. Hence inequality \eqref{eq14-11-27-8} can be improved to
\begin{equation}\label{eq14-11-27-9}
\sum_{i=1}^l\sum_{Y_v^{(i)}}c(Y_v^{(i)})+\sum_{i=1}^lc(v_i)\leq 2opt.
\end{equation}
Combining Lemma \ref{lem14-11-27-7} with Claim 1 and Claim 2 of Theorem \ref{lem10-10-1},
\begin{equation}\label{eq14-11-27-10}
c(D_2)\leq \sum_{i=1}^l\sum_{Y_v^{(i)}}c(Y_v^{(i)})H(a_{1,Y_v^{(i)}})\leq \sum_{i=1}^l\left(H(4)c(Y_{v_i}^{(i)})+H(3)\sum_{Y_v^{(i)},v\neq v_i}c(Y_v^{(i)})\right).
\end{equation}
Since $D_1$ is an $m$-fold dominating set, every $v_i$ has at least one component neighbor. Then by Lemma \ref{lem14-11-27-7}, $deg_{T_i}(v_i)\leq 4$, and thus $Y_{v_i}^{(i)}$ has at most five nodes. Since $v_i$ has the maximum cost in $T_i$, we have $c(Y_{v_i}^{(i)})\leq 5c(v_i)$. So,
$$
H(4)c(Y_{v_i}^{(i)})=H(3)c(Y_{v_i}^{(i)})+\frac{c(Y^{(i)}_{v_i})}{4}\leq H(3)c(Y_{v_i}^{(i)})+5c(v_i)/4<H(3)\big(c(Y_{v_i}^{(i)})+c(v_i)\big).
$$
Combining this inequality with \eqref{eq14-11-27-9} and \eqref{eq14-11-27-10},
$$
c(D_2)\leq H(3)\left(\sum_{i=1}^l\sum_{Y_v^{(i)}}c(Y_v^{(i)})+\sum_{i=1}^lc(v_i)\right)\leq 2H(3)opt<3.67opt.
$$
The theorem is proved.
\end{proof}

\section{Conclusion}\label{sec.7}

In this paper, we presented a $(H(\delta+m)+2H(\delta-1))$-approximation algorithm for the minimum weight $(1,m)$-CDS problem, where $\delta$ is the maximum degree of the graph. Compared with the $1.35\ln n$-approximation algorithm for the minimum $(1,1)$-CDS problem \cite{Guha1}, our constant is larger. However, since in many cases, the maximum degree is much smaller than the number of nodes, our result is an improvement on the performance ratio in some sense. In particular, the replacement of $n$ by $\delta$ in the performance ratio makes it possible to obtain a 3.67-approximation for the connecting part when the topology of the network is a unit disk graph. In fact, Zou {\it et al.} obtained a $2.5\rho$-approximation for the connecting part in a unit disk graph, where $\rho$ is the performance ratio for the minimum Steiner tree problem. If the best $\rho=1.39$ is used, their algorithm has performance ratio $3.475$. Notice that the $1.39$-approximation algorithm for the minimum Steiner tree problem uses a combination of iterated rounding and random rounding. Our algorithm has the advantage of purely combinatorial and deterministic. Furthermore, we expect our method to have a theoretical value which can be used to deal with other problems.

\section*{Acknowledgements}
This research is supported by NSFC (61222201,11531011), SRFDP
(20126501110001), and Xingjiang Talent Youth Project (2013711011). It is accomplished when the second author is visiting National Chiao Tung University, Taiwan, sponsored by ``Aiming for the Top University Program'' of the National Chiao Tung University and Ministry of Education, Taiwan.

\end{document}